\theoremstyle{plain}
\newtheorem{prop}{Proposition}
\theoremstyle{remark}
\newtheorem{rk}{Remark}
\def\beq{\begin{equation}}
\def\eeq{\end{equation}}
\def\fomega{\xi}
\def\cS{\mathcal{S}}
\def\cF{\mathcal{F}}
\newcommand{\IND}{\ensuremath{\mathbbm{1}}}
\def\figdir{figures}
\def\twofig{.48\textwidth}
\def\threefig{.32\textwidth}
\def\bom{{\boldsymbol{\fomega}}}
\def\sym{\text{sym}}
\title{On the zeros of the spectrogram of
  white noise}
\author{
R\'emi Bardenet$^{1}$\footnote{Corresponding author: \href{mailto:remi.bardenet@gmail.com}{remi.bardenet@gmail.com}}, Julien Flamant$^1$, Pierre Chainais$^1$\\
\small $^1$ Univ. Lille, CNRS, Centrale Lille, UMR 9189 -— CRIStAL, 59651 Villeneuve d'Ascq, France \\
}
\date{}
\begin{document}
\maketitle
\begin{abstract}
In a recent paper, \cite{Fla15} has proposed filtering based on the zeros of a
spectrogram, using the short-time Fourier transform and a Gaussian window. His
results are based on empirical observations on the distribution of the zeros of
the spectrogram of white Gaussian noise. These zeros tend to be uniformly spread over the
time-frequency plane, and not to clutter. Our contributions are threefold:
we rigorously define the zeros of the spectrogram of continuous white Gaussian noise, we
explicitly characterize their statistical distribution, and we investigate the
computational and statistical underpinnings of the practical implementation of
signal detection based on the statistics of spectrogram zeros. In
particular, we stress that the zeros of spectrograms of white Gaussian noise
correspond to zeros of Gaussian analytic functions, a topic of recent
independent mathematical interest \citep{HKPV09}.
\end{abstract}

\section{Introduction}
\label{s:intro}
Spectrograms are a cornerstone of time-frequency analysis \citep{Fla98}. They are quadratic time-frequency representations of a
signal \cite[Chapter 4]{Gro01}, associating to each time and frequency a real
number that measures the energy content of a signal at that time and frequency, unlike global-in-time tools such as the Fourier
transform. Since it is natural to expect that there is more energy where there
is more information or signal, most methodologies have focused on detecting and
processing the local maxima of the spectrogram \citep{Coh95, Fla98, Gro01}. Usual techniques include \emph{ridge extraction},
e.g., to identify chirps, or \emph{reassignment} and \emph{synchrosqueezing}, to
better localize the maxima of the spectrogram before further quantitative
analysis. 

In contrast, \cite{Fla15} has recently observed that the locations of the zeros of a
spectrogram in the time-frequency plane almost completely characterize the
spectrogram, and he proposed to use the point pattern formed by the zeros in
filtering and reconstruction of signals in noise. This proposition stems from
the empirical observation that the zeros of the short-time Fourier transform of
white noise are uniformly spread over the time-frequency plane, and tend not to
clutter, as if they repelled each other. In the presence of a signal, zeros are
absent in the time-frequency support of the signal, thus creating large holes
that appear to be very rare when observing pure white noise. This leads to testing
the presence of signal by looking at statistics of the point pattern of zeros,
and trying to identify holes. In this paper, we attempt a formalization of the
approach of \cite{Fla15}. To this purpose, we put together notions of signal
processing, complex analysis, probability, and spatial statistics.

Our contributions are threefold: we rigorously define the zeros of the
spectrogram of continuous white noise, we explicitely characterize their
statistical distribution, and we investigate the computational and statistical
underpinnings of the practical implementation of signal detection. In
particular, we stress that zeros of spectrograms of white noise correspond to
zeros of Gaussian analytic functions, a topic of recent independent mathematical interest \citep{HKPV09}.

In short, our approach starts from the usual definition of white noise as a random
tempered distribution. Using a classical equivalence between the short-time
Fourier transform and the Bargmann transform, we show that the short-time
Fourier transform of white noise can be identified with a random analytic
function, so that we can give a precise meaning to the zeros of the spectrogram of white
noise. It turns out that real and complex Gaussian white noises lead to
recently studied random analytic functions, with completely characterized zeros. We
then investigate how to leverage probabilistic information on these zeros to design
statistical detection procedures. This includes linking probability and
complex analysis results to the discrete implementation of the Fourier
transform.

The rest of the paper is organized as follows. In Section~\ref{s:preliminary},
we introduce the relevant notions of complex analysis, probability, and spatial
statistics. In Section~\ref{s:real}, we characterize the zeros of the short-time
Fourier transform of real white noise, while the complex and the analytical case
are treated in Section~\ref{s:complex}. In Section~\ref{s:stats}, we investigate
the relation between the previous sections and the usual discrete implementation
of the Fourier transform, and we demonstrate a detection task using the spectrogram zeros.

\section{Spectrograms, complex analysis, and point processes}
\label{s:preliminary}
In this section, we survey the relevant notions
from signal processing, probability, and spatial statistics. 

\subsection{The short-time Fourier transform}
\label{s:stft}
Let $f,g\in L^2(\mathbb{R})$, the evaluation at $(u,v)\in\mathbb{R}^2$ of the
short-time Fourier transform (STFT) of $f$ with \emph{window} $g$ reads
\beq
 V_g f (u,v) = \int f(t) \overline{g(t-u)} e^{-2i\pi tv}dt = \langle f, M_v T_u g \rangle,
\label{e:stft}
\eeq
with $\langle \cdot,\cdot\rangle$ denoting the inner product in
$L^2(\mathbb{R})$, $M_vf =
e^{2i\pi v\cdot}f(\cdot)$ and $T_uf = f(\cdot-u)$. We copy our notation from \citep[Chapter 3]{Gro01}, to which
we refer for a thorough introduction. The squared modulus of the STFT
\eqref{e:stft} is called a
\emph{spectrogram}, and it is commonly interpreted as a measure of the
content of the signal $f$ around time $u$ and frequency $v$. In contrast, the
usual Fourier transform only provides the \emph{global} frequency content of
a signal, that is, not localized in time.

The right-hand side of \eqref{e:stft} allows a natural extension of the STFT to
tempered distributions, see \citep[Section 3.1]{Gro01}. This is relevant to us,
as white noise will be defined in Sections~\ref{s:real} and \ref{s:complex} as a random tempered
distribution.

\subsection{The Bargmann transform}
\label{s:bargmann}

Let $a>0$ and consider the Gaussian window $g_a(x) \propto \exp(-\pi a^2 x^2)$, normalized so
that $\Vert g_a\Vert_2=1$. When $a=1$, we drop the subscript and write $g(x) = g_1(x) = 2^{1/4}e^{-\pi x^2}$.

We closely follow the textbook by \cite{Gro01}, only introducing arbitrary window
width, and gather the important result in the following proposition.
\begin{prop}{\cite[Section 3.4]{Gro01}}
Let $f\in L^2(\mathbb{R})$, $u,v\in\mathbb{R}$ and $z=au+i\frac{v}{a}$, then
\begin{eqnarray}
V_{g_a}(f)(u,-v) &\propto&  e^{-i\pi uv}e^{-\frac{\pi}{2}\vert z\vert^2} B\left( f(\cdot/a)
          \right)(z),\label{e:bargmann}
\end{eqnarray}
where the Bargmann transform $B$ is defined by
$$
Bf(z) = 2^{1/4}\int f(t)e^{2\pi tz-\pi t^2-\frac{\pi}{2}z^2}dt.
$$
\end{prop}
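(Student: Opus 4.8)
The plan is to prove the identity in two stages: first settle the canonical case $a=1$ by a direct computation, completing the square inside the STFT integral and recognizing the Bargmann transform; then recover arbitrary window width $a$ by a dilation (change of variables) that reduces it to the case $a=1$. The normalization fixes $g_a(x) = 2^{1/4}a^{1/2}e^{-\pi a^2 x^2} = a^{1/2}g(ax)$, and this dilation relation is what drives the reduction.

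For the base case, since $g$ is real, I would write $V_g f(u,-v) = 2^{1/4}\int f(t)e^{-\pi(t-u)^2}e^{2i\pi tv}\,dt$ and expand the exponent as $-\pi t^2 + 2\pi t(u+iv) - \pi u^2$. Setting $z = u+iv$, the integral becomes $2^{1/4}e^{-\pi u^2}\int f(t)e^{2\pi tz - \pi t^2}\,dt$, which equals $e^{-\pi u^2}e^{\frac{\pi}{2}z^2}Bf(z)$ by the definition of $B$. The only algebra needed is the identity $-\pi u^2 + \frac{\pi}{2}z^2 = -\frac{\pi}{2}\vert z\vert^2 + i\pi uv$, which splits the prefactor into the real Gaussian decay $e^{-\frac{\pi}{2}\vert z\vert^2}$ and a unimodular phase.

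For general $a$, I would substitute $s=at$ in $V_{g_a}f(u,-v) = a^{1/2}\int f(t)g(a(t-u))e^{2i\pi tv}\,dt$; writing $\tilde f = f(\cdot/a)$, $u' = au$, $v' = v/a$, this becomes $a^{-1/2}V_g\tilde f(u',-v')$. The decisive observation is that the new complex variable $u' + iv' = au + iv/a$ is exactly the $z$ appearing in the statement. Applying the base case to $\tilde f$ at $(u',-v')$ then gives $V_{g_a}f(u,-v) = a^{-1/2}e^{i\pi uv}e^{-\frac{\pi}{2}\vert z\vert^2}B(f(\cdot/a))(z)$, and absorbing the constant $a^{-1/2}$ into the proportionality sign yields the claim.

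The main subtlety, rather than a genuine obstacle, is the bookkeeping of the unimodular phase $e^{i\pi uv}$: its sign is fixed by the Fourier convention $e^{-2i\pi tv}$ together with the evaluation at $(u,-v)$, and it matches the sign in the statement under Gröchenig's convention. This is harmless downstream, since the paper only uses the spectrogram $\vert V_{g_a}f\vert^2$, in which the phase and the constant $a^{-1/2}$ both drop out, leaving $\vert V_{g_a}f(u,-v)\vert^2 \propto e^{-\pi\vert z\vert^2}\vert B(f(\cdot/a))(z)\vert^2$; in particular the zeros of the spectrogram coincide with the zeros of the analytic function $z\mapsto B(f(\cdot/a))(z)$. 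The remaining care is routine: tracking the normalization constant $2^{1/4}a^{1/2}$ and justifying the interchange of integral and substitution, both standard in the $L^2$ framework of \cite{Gro01}.
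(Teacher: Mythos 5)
Your proof is correct and is essentially the paper's own argument: the paper completes the square directly for general $a$ and substitutes $s=at$ in one pass, while you reorganize the same computation into a base case $a=1$ plus the dilation covariance $V_{g_a}f(u,-v)=a^{-1/2}\,V_g\left(f(\cdot/a)\right)(au,-v/a)$; the substitution driving your reduction is exactly the paper's change of variables, so nothing new is needed and nothing is lost. One point where you are in fact more careful than the paper: the phase you obtain, $e^{+i\pi uv}$, is the correct one (it agrees with Gr\"ochenig's Proposition 3.4.1). The $e^{-i\pi uv}$ in the statement---and in the last line of the paper's own proof---is a sign slip: substituting $v\mapsto -v$ in the intermediate identity $V_{g_a}f(u,v)\propto e^{-i\pi uv}e^{-\frac{\pi}{2}\vert z\vert^2}\int f(s/a)\,e^{-\pi s^2+2\pi s\bar z-\frac{\pi}{2}\bar z^2}ds$ turns $\bar z$ into $z$ but also flips the phase to $e^{+i\pi uv}$, which the paper's ``or equivalently'' step silently omits. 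As you correctly note, this is immaterial downstream, since only the spectrogram $\vert V_{g_a}f\vert^2$ is used: the unimodular phase and the constant $a^{-1/2}$ both drop out, and the identification of the zeros with those of $z\mapsto B\left(f(\cdot/a)\right)(z)$ is unaffected.
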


\begin{proof}
The particular shape of the window allows us to write
\begin{eqnarray*}
V_{g_a}(f)(u,v) &\propto& \int f(t)e^{-\pi a^2(t-u)^2}e^{-2i\pi tv}dt\\
&=& \int f(t) e^{-\pi a^2t^2}e^{-\pi a^2u^2}e^{2a^2\pi tu}e^{-2i\pi vt}dt\\
&=& e^{-i\pi uv} e^{-\frac{\pi}{2}(a^2u^2+\frac{v^2}{a^2})}\int f(t) e^{-\pi a^2
    t^2}e^{2a\pi t(au-i\frac{v}{a})} e^{-\frac{\pi}{2}(au-i\frac{v}{a})^2}dt.
\end{eqnarray*}
Making the change of variables $s=at$ and denoting
\begin{equation}
z=au+i\frac{v}{a},
\label{e:complexTiling}
\end{equation}
 we obtain
\begin{eqnarray*}
V_{g_a}(f)(u,v) &\propto& e^{-i\pi uv}e^{-\frac{\pi}{2}\vert z\vert^2}\int
                          f\left(\frac{s}{a}\right)e^{-\pi s^2}e^{2\pi s \bar{z}}e^{-\frac{\pi}{2}\bar{z}^2}ds,
\end{eqnarray*}
or equivalently 
\begin{eqnarray}
V_{g_a}(f)(u,-v) &\propto& e^{-i\pi uv}e^{-\frac{\pi}{2}\vert z\vert^2}\int
                          f\left(\frac{s}{a}\right)e^{-\pi s^2}e^{2\pi s
                           z}e^{-\frac{\pi}{2}z^2}ds\nonumber\\
&\propto& e^{-i\pi uv}e^{-\frac{\pi}{2}\vert z\vert^2} B\left( f(\cdot/a)
          \right)(z),\label{e:bargmann}
\end{eqnarray}
where we have defined the Bargmann transform by
$$
Bf(z) = 2^{1/4}\int f(t)e^{2\pi tz-\pi t^2-\frac{\pi}{2}z^2}dt.
$$
\end{proof}
Equation \eqref{e:bargmann} tells us that the zeros of the spectrogram
$u,v\mapsto \vert V_{g_a}(f)(u,v)\vert^2$ are those of the Bargmann transform
of $s\mapsto f(s/a)$. Moreover, Equation~\eqref{e:bargmann} also readily extends
to tempered distributions.

\subsection{Hermite functions}
\label{s:hermite}
Some functions turn out to have a very simple closed-form Bargmann transform. Informally, if we had an
orthonormal basis of $L^2(\mathbb{R})$ formed by such functions, then we could
decompose a white noise onto this basis, and easily compute the STFT of white
noise using closed-form Bargmann transforms. We now introduce Hermite functions,
which will play this exact role in later sections.

Let $H_n$ be the orthonormal polynomials with respect to the Gaussian window $g$, usually called the Hermite
polynomials in the literature \citep{Gau04}. Then, making the change of variables $x'=ax$, it comes
$$
\int H_k(ax)H_\ell(ax)g_a(x)d x \propto \int H_k(x') H_\ell(x')
g(x') d x' =
\delta_{k\ell}.
$$
The Hermite functions $h_{a,k} \propto H_k(a\cdot) \sqrt{g_a(\cdot)}$, normed
so that $\Vert h_{a,k}\Vert_2 =1$, form an orthonormal basis of
$L^2(\mathbb{R})$ \citep{Gau04}. When
$a=1$, we again drop a subscript and denote $h_k=h_{1,k}$. To compute the STFT of
an Hermite function using \eqref{e:bargmann}, first note that for all $s$,
$h_{a,k}(s/a) \propto h_{k}(s)$,
so that
\begin{eqnarray*}
V_{g_a}(h_{a,k})(u,-v) &\propto& e^{-i\pi uv}e^{-\frac{\pi}{2}\vert z\vert^2} B(h_k)(z)\\
&=& e^{-i\pi uv}e^{-\frac{\pi}{2}\vert z\vert^2} \frac{\pi^{k/2}z^k}{\sqrt{k!}},
\end{eqnarray*}
see \cite[Section 3.4]{Gro01} for the last equality. 

\subsection{Point processes on $\mathbb{C}$}
\label{s:spatial}
The zeros of the spectrogram of a random signal form a \emph{point process}. Formally, a point process over $\mathbb{C}$ is a probability distribution over configurations of
points in $\mathbb{C}$, i.e., unordered sets of complex numbers. In particular,
the cardinality of a realization of a point process is random. In this section,
we introduce point processes and basic descriptive statistics.

\subsubsection{Generalities}
The simplest point process over $\mathbb{C}$ is the Poisson point process with constant rate
$\lambda\in\mathbb{R_+}$. It is defined as the unique point process
such that, for any $B\in\mathbb{C}$ with finite Lebesgue measure $\vert B\vert$,
$(i)$ the number of points in
$B$ is a Poisson random variable with mean $\lambda\vert B\vert$, and $(ii)$
conditionally on the number of points in $B$, the points are drawn independently
from the uniform measure on $B$. For existence and further properties, see e.g.
\cite[Chapter 3]{MoWa03}. 

More general point processes can be characterized by their $k$-point correlation functions
$\rho^{(k)}$ for $k\geq 1$, informally defined by
\begin{equation}
\rho^{(k)}(x_1,\dots,x_k)dx_1\dots dx_k = \mathbb{P}\begin{pmatrix}\text{There are at least $k$ points, one in each of the}\\
\text{infinitesimal balls $B(x_i, dx_i), i=1,\dots,k$}\end{pmatrix},
\label{e:correlationFunctions}
\end{equation}
for all $x_1,\dots,x_k$ in $\mathbb{C}$, see \cite[Section 5.4]{DaVe03} for a
rigorous treatment. Of particular interest to us will be the first and
second-order interaction between the points in a realization of a point process,
encoded by $\rho^{(1)}$ and $\rho^{(2)}$, respectively.

The first order correlation function $\rho^{(1)}$ is often called the intensity of the point process, for it yields, when
integrated over a Borel set $B\subset \mathbb{C}$, the average number of points
falling in $B$ under the point process distribution. For the Poisson point
process with constant rate $\lambda$, for instance, the intensity is precisely
$\lambda$, and thus constant over $\mathbb{C}$. 

The two-point correlation function $\rho^{(2)}$ is often renormalized to obtain
the so-called \emph{pair correlation function} $$ g(x,y) =
\frac{\rho^{(2)}(x,y)}{\rho^{(1)}(x)\rho^{(1)}(y)},$$ see \cite[Chapter
4]{MoWa03}. For a Poisson point process with constant rate, $g$ is
identically $1$. When $g(x,y)>1$, \eqref{e:correlationFunctions} indicates that
pairs are more likely to occur around $(x,y)$ than under a Poisson process with the same intensity
function. Similarly $g(x,y)<1$ indicates that pairs are less likely
to occur. Finally, when the point process is both stationary (i.e., invariant
to translations) and isotropic (i.e., invariant to rotations), then $g$ only
depends on the distance $r=\vert x-y\vert$, and we denote it by $g_0(r)=g(x,y)$.

\subsubsection{The Ginibre ensemble}
\label{e:ginibre}
We give here another example of a point process on $\mathbb{C}$, in
order to demonstrate a non-constant pair correlation function. If
there exists a function $\kappa:\mathbb{C}\times\mathbb{C}\rightarrow \mathbb{C}$
such that the correlation functions \eqref{e:correlationFunctions} with 
\begin{equation}
\rho^{(k)}(x_1,\dots,x_k) = \det \big[\kappa(x_i,x_j)\big]_{1\leq i,j\leq k}
\label{e:dppCorrelationFunctions}
\end{equation}
consistently define a point process, then this point process is called a
\emph{determinantal} point process (DPP) with kernel $\kappa$. DPPs were first
introduced by \cite{Mac75}, and we refer the reader to \citep{HKPV06,LaMoRu14}
for modern introductions and conditions of
existence. A classical example of DPP
over $\mathbb{C}$ is the infinite Ginibre ensemble. It is defined by its kernel
$$\kappa^{\text{Gin}}(z,w) = e^{-\frac{\pi}{2}\vert z\vert^2} e^{\pi z\bar w}
e^{-\frac{\pi}{2}\vert w\vert^2}.$$
The Ginibre ensemble is stationary
and isotropic, its intensity is constant equal to 1, and its pair correlation is
$$
g_0^{\text{Gin}}(r) = 1 - e^{-\pi r^2},
$$
see \cite[Section 4.3.7]{HKPV09} for these properties, noting that our version
is rescaled to have unit intensity. We also plot
$g_0^{\text{Gin}}$ in Figure~\ref{f:rho}. Importantly for us,
$g_0^{\text{Gin}}(r)<1$ for all $r>0$, which shows that Ginibre is a
\emph{repulsive} point process: pairs are less likely than Poisson at all
scales, which we can interpret as points in a realization repelling each other.
Finally, we note that by definition \eqref{e:dppCorrelationFunctions}, if a DPP
is stationary and isotropic, and if it has an Hermitian kernel, that is $\kappa(x,y)=\overline{\kappa(y,x)}$, then $g_0<1$. 

\subsubsection{Functional statistics}
\label{s:LAndK}
We will need to investigate how repulsive a stationary and isotropic point
process on $\mathbb{C}$ like Ginibre is, given one of its realizations over a compact
window of observation. While estimators of $g_0$ have been investigated
\cite[Section 4.3]{MoWa03}, practitioners usually prefer estimating Ripley's $K$
function
$$
K(r) = 2\pi\int_{0}^r t g_0(t)dt, \quad r>0,
$$
and then the so-called \emph{variance-stabilized} $L$ functional statistic
$$
L(r) = \sqrt{K(r)/\pi},
$$
which equals $r$ for a unit rate Poisson process. $K$ is proportional to the
expected number of pairs at distance smaller than $r$. Estimating $K$ from data
is thus relatively straightforward and involves counting pairs distant from a
collection of values of $r$. Furthermore, sophisticated edge corrections have
been proposed to take into account the fact that the observation window is
necessarily bounded \cite[Section 4.3]{MoWa03}. Estimating $L$ after one has
obtained an estimate of $K$ is then
straightforward. Plotting the estimated $K$ or $L$ as a
function of $r$ allows identification of scales at which the point process is
repulsive, in the sense that we can observe a lack of pairs within a given
distance compared to a Poisson process. For instance, we plot in
Figure~\ref{f:L} the function $r\mapsto L(r)-r$ for Ginibre: there is a clear
lack of pairs at small scales, compared to the constant zero of a Poisson
process. 

\cite[Section 4.2]{MoWa03} cover many more functional statistics for stationary point processes. In particular, we mention for future reference the
so-called \emph{empty space function} $F$ and the \emph{nearest neighbour}
function $G$. For $r>0$, $F(r)$ is
defined as the probability that a ball centered at $0$ and with radius $r$
contains at least one point. Stationarity implies that the center of the ball
can be chosen arbitrarily, and $F$ thus encodes the distribution of hole sizes
in the point process. Similarly, $G$ is the cumulative distribution function of
the distance from a typical random point of the point process to its nearest
neighbour in the point process.

\section{The spectrogram of real white noise}
\label{s:real}
In this section, we define real white noise, and examine the zeros of its
spectrogram. 
\subsection{Definitions}

To define white noise, we closely follow \cite[Chapter 2.1]{HOUZ10} through a
classical approach that does not require defining Brownian motion first. We denote by $\cS=\cS(\mathbb{R})$
the Schwartz space of rapidly decaying smooth complex-valued functions of a real
variable. The dual $\cS'=\cS'(\mathbb{R})$, equipped with the weak-star
topology, is the space of \emph{tempered distributions}. The topology yields the
Borel sigma-algebra $\mathcal{B}(\cS')$ on $\cS'$. Now, the Bochner-Minlos
theorem \citep[Theorem 2.1.1]{HOUZ10} states that there exists a unique probability
measure $\mu_1$ on $(\cS',\mathcal{B}(\cS'))$ such that 
\beq
\forall \phi\in \cS, \quad \mathbb{E}_{\mu_1}e^{i\langle \cdot,\phi \rangle} =
e^{-\frac{1}{2}\Vert \phi\Vert_2^2}.
\label{e:bochner}
\eeq
We call this measure white noise, and $(\cS',B(\cS'),\mu_1)$ the white noise
probability space. In particular, \eqref{e:bochner}
implies that for a random variable\footnote{We use the term \emph{random variable}, but it is also customary to call $\fomega$ a \emph{generalized random process} in the literature.} with distribution
$\mu_1$ and a set of real-valued orthonormal functions $\varphi_1,\dots,\varphi_p$ in $\cS$, the vector
$(\langle \fomega_1,\varphi_1 \rangle,\dots,\langle \fomega_p,\varphi_p \rangle)$
follows a real multivariate Gaussian, with mean zero and identity covariance matrix,
see \cite[Lemma 2.1.2]{HOUZ10}. This is in accordance with the usual heuristic of white noise having a Dirac delta covariance function.

Let $\fomega$ be a random variable with distribution $\mu_1$.
If $g\in\cS$, then $(u,v)\mapsto M_v T_u g$ is in $\cS$, so that we can define
the STFT of $\fomega$ as the random function
$$ u,v\mapsto \langle \fomega,  M_v T_u g \rangle.$$
From now on, we restrict ourselves to the Gaussian window $g(x) =
2^{1/4}e^{-\pi x^2}$, normalized so that $\Vert g\Vert_2 = 1$. We are interested
in defining and studying the zeros of the spectrogram
\beq
\label{e:spectrogram}
S:u,v\mapsto \vert \langle \fomega,  M_v T_u g \rangle\vert^2.
\eeq

\subsection{Characterizing the zeros}
\label{e:zerosSymmetricGAF}
We work in two steps: in Proposition~\ref{p:series}, we identify each value
$S(u,v)$ in \eqref{e:spectrogram} as a limit in $L^2(\mu_1)$, and we then
show in Proposition~\ref{p:symmetricPlanarGAF} that the resulting random field
defines an entire function, the zeros of which are known.
\begin{prop}
Let $u,v\in\mathbb{R}^2$, and write $z=u+iv\in\mathbb{C}$. Then
\beq
\langle \fomega,  M_v T_u g \rangle = \sqrt{\pi} e^{i\pi uv}e^{-\frac{\pi}{2}\vert z\vert^2} \sum_{k=0}^{\infty}\langle \fomega,h_k
\rangle \frac{\pi^{k/2}z^k}{\sqrt{k!}} 
\label{e:series}
\eeq
where $(h_k)$ denote the orthonormal Hermite functions \cite[Section 2.2.1]{HOUZ10}, and convergence is in $L^2(\mu_1)$.
\label{p:series}
\end{prop}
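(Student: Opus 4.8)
The plan is to expand the deterministic test function $\psi:=M_v T_u g$ in the orthonormal Hermite basis $(h_k)$ and to push that expansion through the pairing with $\fomega$ by exploiting the Gaussian structure of white noise. First I would note that $\psi\in\cS$, because $g\in\cS$ and both the modulation $M_v$ and the translation $T_u$ preserve the Schwartz class; in particular $\langle\fomega,\psi\rangle$ is well defined. Its Hermite coefficients $c_k:=\langle\psi,h_k\rangle$ can be read off from Section~\ref{s:hermite}: since $h_k$ is real, $c_k=\int(M_v T_u g)(t)h_k(t)\,dt=V_g(h_k)(u,-v)$, which was computed there in closed form. Up to the overall normalization constant and a complex conjugation that is immaterial for the modulus $\vert V_g\fomega\vert^2$, this identifies the deterministic part of the $k$-th summand of \eqref{e:series} as $\sqrt{\pi}\,e^{i\pi uv}e^{-\frac{\pi}{2}\vert z\vert^2}\pi^{k/2}z^k/\sqrt{k!}$.

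The core of the argument is to interchange the infinite Hermite sum with the white-noise pairing, and here the mode of convergence is what matters. I would set $\psi_N:=\sum_{k=0}^N c_k h_k$ and use two facts. By Parseval, $\psi_N\to\psi$ in $L^2(\mathbb{R})$ and $\sum_k\vert c_k\vert^2=\Vert\psi\Vert_2^2=\Vert g\Vert_2^2=1$, using that $M_v$ and $T_u$ are unitary and $\Vert g\Vert_2=1$. By linearity of the pairing in the test function, $\langle\fomega,\psi_N\rangle=\sum_{k=0}^N c_k\langle\fomega,h_k\rangle$. The key input is that $\phi\mapsto\langle\fomega,\phi\rangle$ is an isometry from $\cS$, equipped with the $L^2$ norm, into $L^2(\mu_1)$: applying \cite[Lemma 2.1.2]{HOUZ10} to a real orthonormal family and polarizing $\mathbb{E}_{\mu_1}\langle\fomega,\varphi\rangle^2=\Vert\varphi\Vert_2^2$ gives $\mathbb{E}_{\mu_1}[\langle\fomega,\varphi\rangle\langle\fomega,\varrho\rangle]=\langle\varphi,\varrho\rangle$ for real $\varphi,\varrho\in\cS$, and extending to complex test functions yields $\mathbb{E}_{\mu_1}\vert\langle\fomega,\phi\rangle\vert^2=\Vert\phi\Vert_2^2$.

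With the isometry in hand the conclusion is immediate: since $\psi_N$ and $\psi$ both lie in $\cS$, one has $\mathbb{E}_{\mu_1}\vert\langle\fomega,\psi\rangle-\langle\fomega,\psi_N\rangle\vert^2=\Vert\psi-\psi_N\Vert_2^2\to0$, so the partial sums $\sum_{k=0}^N c_k\langle\fomega,h_k\rangle$ converge in $L^2(\mu_1)$ to $\langle\fomega,\psi\rangle$; inserting the closed form of $c_k$ and factoring out $\sqrt{\pi}\,e^{i\pi uv}e^{-\frac{\pi}{2}\vert z\vert^2}$ yields \eqref{e:series}. I expect the only genuine obstacle to be this interchange, namely establishing the isometric embedding of test functions into $L^2(\mu_1)$ and using it to transfer the $L^2(\mathbb{R})$-convergence of $\psi_N$ into $L^2(\mu_1)$-convergence of the pairings; the remaining ingredients, Schwartz membership, Parseval, and the explicit coefficients from Section~\ref{s:hermite}, are routine. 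I would finally record that, by the same lemma, the coefficients $(\langle\fomega,h_k\rangle)_k$ are i.i.d. real standard Gaussians, which is exactly the structure needed for the subsequent Proposition~\ref{p:symmetricPlanarGAF}.
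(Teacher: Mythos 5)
Your proof is correct and takes essentially the same route as the paper: expand $M_vT_ug$ in the Hermite basis, identify the coefficients with the closed-form STFT (Bargmann transform) of the Hermite functions from Sections~\ref{s:bargmann} and \ref{s:hermite}, and pass the expansion through the pairing with convergence in $L^2(\mu_1)$. The only difference is that you spell out the isometry $\mathbb{E}_{\mu_1}\vert\langle\fomega,\phi\rangle\vert^2=\Vert\phi\Vert_2^2$ (via \cite[Lemma 2.1.2]{HOUZ10} and polarization) that justifies exchanging the sum and the pairing, a step the paper's proof asserts without detail when it states that the limits in \eqref{e:decomp} hold in $L^2(\mu_1)$.
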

\begin{rk}
\label{r:zeros}
Note that in Proposition~\ref{p:series}, $u$ and $v$ are fixed, and the equality
is a limit in $L^2(\mu_1)$. It is still too early to identify the zeros of
the left-hand side to the zeros of the right-hand side.
\end{rk}
\begin{rk}
Note that our choice of the window $g(x) = 2^{1/4}e^{-\pi x^2}$ is made to
simplify expressions. The proof of Proposition~\ref{p:series}, along with
Sections~\ref{s:hermite} and \ref{s:bargmann}, immediately yield that for a non-unit Gaussian window $g_a(x) \propto \exp(-\pi
a^2 x^2)$, Proposition~\ref{p:series} is unchanged, provided that $z$ is defined
as $z = au + iv/a$ and a constant is prepended to the RHS of
\eqref{e:series}. In other words, given a particular value of $a$, it is always possible to
dilate/squeeze the time-frequency axes to obtain the results detailed
here for $a = 1$. 
\end{rk}
\begin{proof}
Let $u,v\in\mathbb{R}^2$. Decomposing $M_v T_u g$ in the Hermite basis $(h_k)$ of
$L^2(\mathbb{R})$, it comes
\begin{eqnarray}
\langle \fomega, M_v T_u g \rangle &=& \sum_{k=0}^\infty \langle \fomega,h_k
  \rangle \langle  M_v T_u g,h_k\rangle\nonumber\\
&=&  \sum_{k=0}^\infty \langle \fomega,h_k
  \rangle \overline{V_g(h_k)(u,v)}
\label{e:decomp}
\end{eqnarray}
where the limits are in $L^2(\mu_1)$. The STFT of Hermite functions is
well-known, see e.g. the proof of \cite[Proposition 3.4.4]{Gro01} or our Section~\ref{s:bargmann}, and it reads
\beq 
V_g(h_k)(u,v) = e^{-i\pi
  uv}e^{-\frac{\pi}{2}(u^2+v^2)}\frac{\pi^{k/2}}{\sqrt{k!}}(u-iv)^k.
\label{e:HermiteSTFT}
\eeq
Plugging \eqref{e:HermiteSTFT} into \eqref{e:decomp} yields the result.
\end{proof}

Now we focus on the regularity of the right-hand side of \eqref{e:series}.
\begin{prop}
\label{p:symmetricPlanarGAF}
The random series
\beq
\label{e:symmetricPlanarGAF}
\sum_{k=0}^{\infty}\langle \fomega,h_k
\rangle \frac{\pi^{k/2}z^k}{\sqrt{k!}}
\eeq
$\mu_1$-almost surely defines an entire function. 
\end{prop}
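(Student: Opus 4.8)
The plan is to show that the random power series $\sum_{k\ge0} c_k z^k$, with random coefficients $c_k = \langle\fomega,h_k\rangle\,\pi^{k/2}/\sqrt{k!}$, has almost surely infinite radius of convergence, since a power series with infinite radius of convergence defines an entire function. By the Cauchy--Hadamard criterion this amounts to showing $\limsup_k |c_k|^{1/k} = 0$ almost surely, or equivalently that for every fixed $R>0$ the series converges absolutely on the disc of radius $R$. I would work with the latter formulation, as it interacts cleanly with the Borel--Cantelli lemma.

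\medskip

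\noindent First I would identify the joint law of the coefficients. By the defining property of white noise recorded after \eqref{e:bochner}, since $(h_k)$ is a real-valued orthonormal family in $\cS$, the variables $X_k := \langle\fomega,h_k\rangle$ are i.i.d.\ standard real Gaussians under $\mu_1$. Hence $|c_k|^{1/k} = |X_k|^{1/k}\,\pi^{1/2}/(k!)^{1/(2k)}$, and by Stirling's formula $(k!)^{1/(2k)} \sim \sqrt{k/e}\to\infty$, so the deterministic prefactor already decays to $0$. The only thing that could spoil convergence is the random factor $|X_k|^{1/k}$, so the key step is to control the growth of $|X_k|$.

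\medskip

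\noindent The crucial estimate is a Gaussian tail bound combined with Borel--Cantelli. For a standard Gaussian, $\mathbb{P}(|X_k| > t) \le 2e^{-t^2/2}$, so taking $t = k$ gives $\sum_k \mathbb{P}(|X_k| > k) \le \sum_k 2e^{-k^2/2} < \infty$. By the first Borel--Cantelli lemma, $\mu_1$-almost surely $|X_k| \le k$ for all sufficiently large $k$. On this almost sure event, $|X_k|^{1/k} \le k^{1/k} \to 1$, hence
\beq
\limsup_{k\to\infty} |c_k|^{1/k} \le \limsup_{k\to\infty} \frac{\pi^{1/2}\,k^{1/k}}{(k!)^{1/(2k)}} = 0,
\eeq
so the radius of convergence is infinite and \eqref{e:symmetricPlanarGAF} defines an entire function almost surely.

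\medskip

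\noindent I expect the main obstacle to be conceptual rather than computational: one must be careful that the ``function'' in question is genuinely a well-defined random analytic function rather than merely a pointwise $L^2(\mu_1)$-limit at each $z$ as in Proposition~\ref{p:series}. The clean way around this, and what Remark~\ref{r:zeros} is hinting at, is to prove the almost sure statement directly in terms of the coefficients $(X_k)$ as above: once we know that $\mu_1$-almost surely the sequence $(X_k)$ grows subexponentially, the series $\sum_k c_k z^k$ converges locally uniformly on all of $\mathbb{C}$ for that fixed realization of $\fomega$, and locally uniform limits of polynomials are entire by Weierstrass's theorem. The routine Stirling and tail computations can be compressed, but the logical ordering---first fix the almost sure event for the coefficients, then deduce entireness pathwise---is the point that deserves emphasis.
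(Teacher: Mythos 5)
Your proof is correct, and it is self-contained precisely where the paper's proof is a citation. The first step is identical: like the paper, you use the defining property of white noise (the paper invokes \cite[Lemma 2.1.2]{HOUZ10}) to conclude that the coefficients $X_k=\langle\fomega,h_k\rangle$ are i.i.d.\ standard real Gaussians. At that point the paper stops and applies the first part of \cite[Lemma 2.2.3]{HKPV09}, a general criterion asserting that for holomorphic functions $f_k$ on a region and i.i.d.\ Gaussian coefficients, $\sum_k X_k f_k$ converges almost surely locally uniformly when $\sum_k \vert f_k(z)\vert^2$ converges locally uniformly --- here $f_k(z)=\pi^{k/2}z^k/\sqrt{k!}$, so $\sum_k \vert f_k(z)\vert^2 = e^{\pi\vert z\vert^2}$ and the hypothesis is immediate. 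You instead prove the special power-series case from scratch: the Gaussian tail bound plus the first Borel--Cantelli lemma give $\vert X_k\vert\le k$ eventually almost surely, Stirling makes the deterministic factor $\pi^{1/2}/(k!)^{1/(2k)}$ vanish, and Cauchy--Hadamard yields almost surely infinite radius of convergence, hence locally uniform convergence of the partial sums and entireness pathwise --- all steps check out. The paper's route buys brevity, the converse direction of the lemma, and a statement general enough to be reused verbatim for the planar GAF of Proposition~\ref{p:planarGAF} (complex Gaussian coefficients, for which the lemma is actually stated); yours buys an elementary argument with no external dependency, and it transfers to the complex case just as easily since the same tail bound holds for the modulus of a complex Gaussian. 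Your closing remark --- fix the almost-sure event controlling the coefficients first, then deduce entireness for each realization --- is exactly the right reading of Remark~\ref{r:zeros}; note that the paper resolves that remark only after the proposition, by observing that the pointwise $L^2(\mu_1)$ limits of Proposition~\ref{p:series} and the almost-sure limit must coincide because both imply convergence in probability.
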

\begin{proof}
By \cite[Lemma 2.1.2]{HOUZ10}, ($\langle \fomega,h_k
\rangle)_{k\geq 0}$ are i.i.d. unit real Gaussians. We then apply the first part
of \cite[Lemma 2.2.3]{HKPV09}.
\end{proof}
Since both $L^2$ and almost sure convergence imply convergence in probability,
$L^2$ and almost sure limits have to be the same. In particular,
Propositions~\ref{p:series} and \ref{p:symmetricPlanarGAF} together yield that the
distribution of the zeros of the spectrogram $S$ in \eqref{e:spectrogram} is
the same as the distribution of the zeros of the random entire function
\eqref{e:symmetricPlanarGAF}. This answers Remark~\ref{r:zeros}. In particular,
we now know that the zeros of $S$ are isolated.

The entire function in \eqref{e:symmetricPlanarGAF} is called the
\emph{symmetric planar Gaussian analytic function} (GAF), and a few of its properties are known
\citep{Fel13}. However, its zeros do not define a stationary point process. In
particular, a portion of the zeros concentrate on the real axis, see
Figure~\ref{f:symmetricPlanarGAF}. Intuitively, one can approximate the zeros of
\eqref{e:symmetricPlanarGAF} by the zeros of the random polynomial obtained from
truncating the series. The resulting polynomial has real coefficients, and it is
thus expected to have real zeros as well as pairs of conjugate complex zeros. As a side note, the number of real zeros is a
topic of study on its own, see e.g. \citep{ScMa08}. 

\begin{figure}
\subfigure[Real white noise/symmetric GAF]{
\includegraphics[width=\twofig]{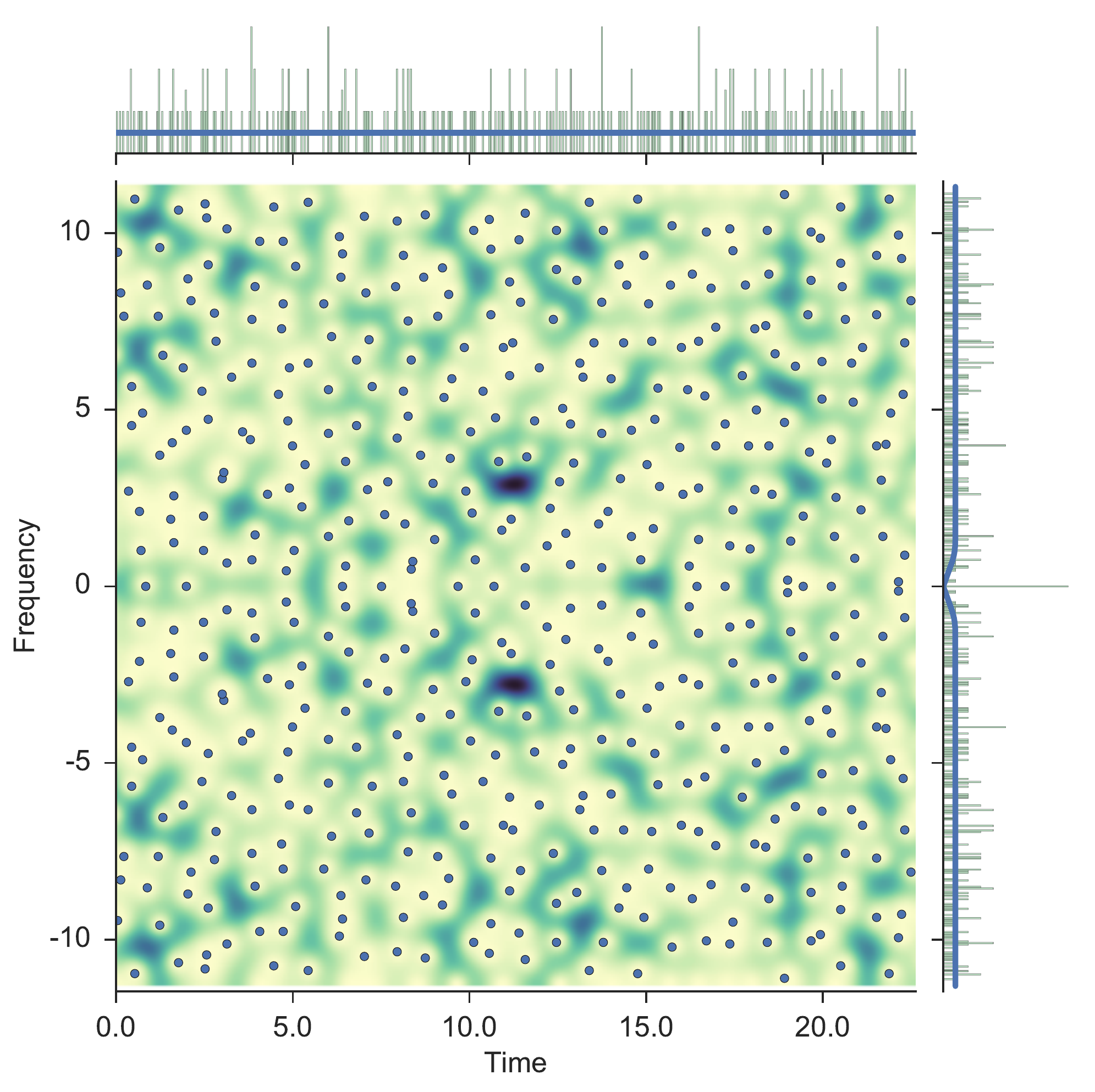}
\label{f:symmetricPlanarGAF}
}
\subfigure[Complex white noise/planar GAF]{
\includegraphics[width=\twofig]{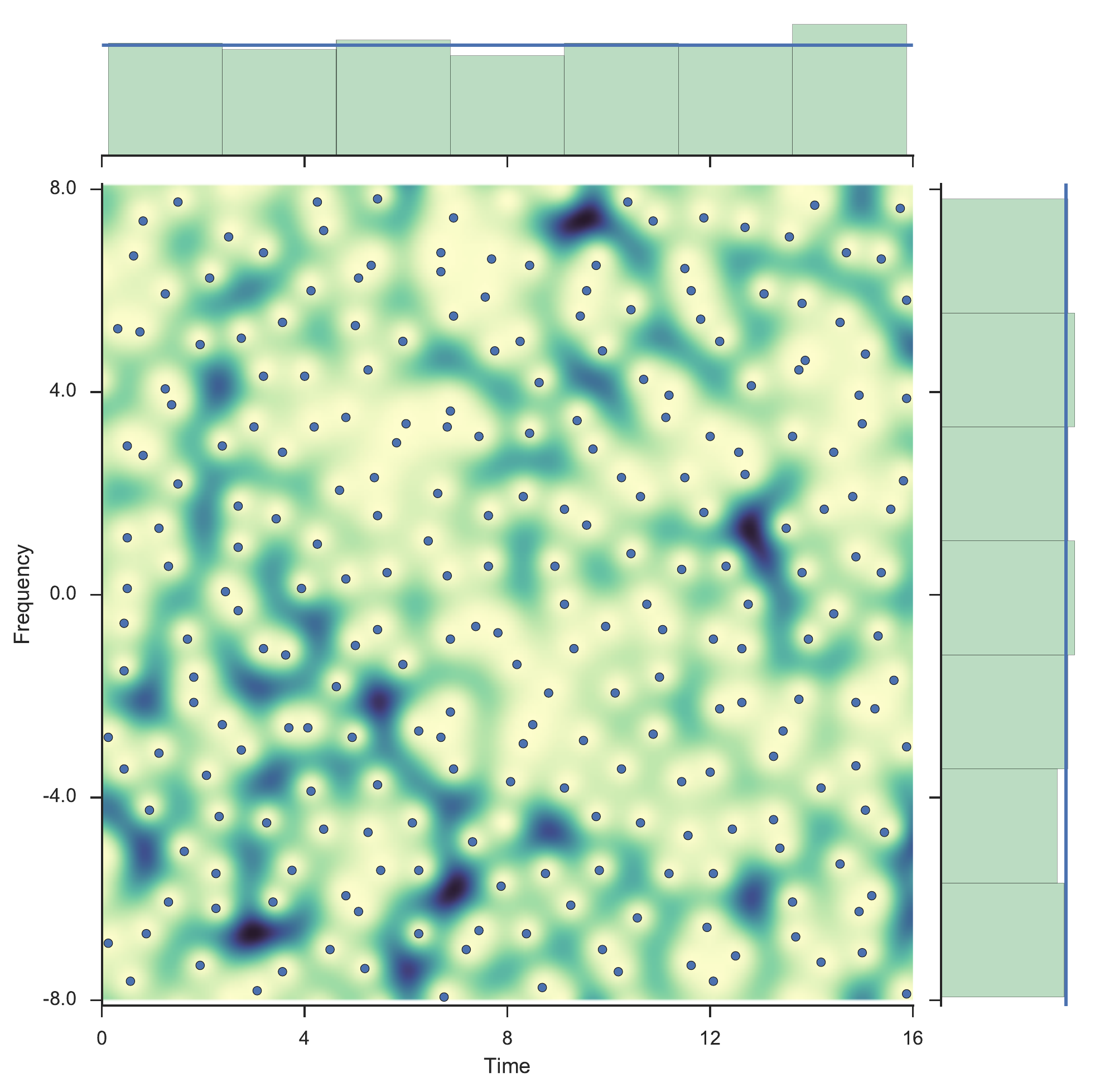}
\label{f:complexPlanarGAF}
}
\caption{The spectrogram of (a) a realization of real white noise, and (b) a
  realization of complex white noise. The right and top plots on each panel show
marginal histograms, superimposed with the theoretical marginal density, see
text for details.}
\label{f:GAFs}
\end{figure}

Coming back to our problem of detecting signals, this non-stationarity
makes it uneasy to approach via traditional spatial statistics techniques, which
often assume some degree of stationarity. However, there is a stationary point
process that is a good approximation for the zeros of the symmetric planar GAF,
and that has been studied in depth. This point process is the zeros of the
\emph{planar GAF}, the entire function corresponding to the STFT of complex
white noise.

\section{The case of complex white noise}
\label{s:complex}
We now introduce the planar GAF, and explain why its zeros are a
good approximation to those of the symmetric planar GAF. In other words, we
justify why the spectrogram of the real white Gaussian noise can be approximated
by that of the complex white Gaussian noise. We conclude by
considering the analytic white noise. 

\subsection{Definitions}
\label{s:complexWhiteNoise}
Consider the \emph{two-dimensional white noise} of \cite[Section
2.1.2]{HOUZ10}, that is, the space $\cS'\times
\cS'$, with the Borel $\sigma$-algebra associated to the product weak star topology, and measure
$\mu_1\times\mu_1$. A draw $\bom=(\fomega_1,\fomega_2)\sim\mu_1\times\mu_1$
consists of two independent white noises. Letting
$\boldsymbol{\phi}=(\phi_1,\phi_2)$ in
$\cS\times \cS$, we \emph{define} the smoothed complex
white noise as in \cite[Exercise 2.26]{HOUZ10} through 
$$ 
w(\boldsymbol{\phi},\fomega) =  \langle\fomega_1, \phi_1\rangle + i \langle\fomega_2, \phi_2\rangle, 
$$
where $\bom\sim\mu_1\times\mu_1 $. It is called ``smoothed'' because we define
it using a pair of test functions $\boldsymbol{\phi}$, which will be enough for
our purpose. Note also that in signal processing, this is typically called a
\emph{proper} or \emph{circular} Gaussian white noise \citep{PiBo97}.

Now, if we let both test functions be $t\mapsto M_v T_u g$, we recover what can
reasonably be called the STFT of complex white noise 
\beq
u,v\mapsto \langle\fomega_1, M_v T_ug\rangle + i \langle\fomega_2, M_v T_u
g\rangle.
\label{e:complexSTFT}
\eeq
\subsection{Characterizing the zeros}
\label{e:zerosPlanarGAF}
The same arguments as in the proofs of Propositions~\ref{p:series} and
\ref{p:symmetricPlanarGAF} lead to
\begin{prop}
\label{p:planarGAF}
With $\mu_1\times\mu_1$ probability $1$, the zeros of the STFT
\eqref{e:complexSTFT} are those of the entire function
\beq
\label{e:planarGAF}
\frac{1}{\sqrt{2}} \sum_{k=0}^\infty\left( \langle \fomega_1,h_k \rangle + i\langle \fomega_2, h_k
  \rangle \right) \frac{\pi^{k/2}z^{k}}{\sqrt{k!}},
\eeq
where $z=u+iv$.
\end{prop}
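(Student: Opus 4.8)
The plan is to follow exactly the two-step structure already established for the real case in Propositions~\ref{p:series} and \ref{p:symmetricPlanarGAF}, applied to each of the two independent white noises and then combined. First I would decompose each component of the STFT \eqref{e:complexSTFT} in the Hermite basis, exactly as in the proof of Proposition~\ref{p:series}. Applying that proposition separately to $\fomega_1$ and $\fomega_2$, I obtain, as limits in $L^2(\mu_1)$,
\beq
\langle \fomega_j, M_v T_u g\rangle = \sqrt{\pi}\, e^{i\pi uv} e^{-\frac{\pi}{2}|z|^2}\sum_{k=0}^\infty \langle\fomega_j,h_k\rangle \frac{\pi^{k/2}z^k}{\sqrt{k!}}, \qquad j=1,2,
\eeq
with $z=u+iv$. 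Forming the combination \eqref{e:complexSTFT}, namely the first plus $i$ times the second, the common prefactor $\sqrt{\pi}\,e^{i\pi uv}e^{-\frac{\pi}{2}|z|^2}$ factors out and is nowhere zero, so it does not affect the location of the zeros; collecting the series gives the coefficients $\langle\fomega_1,h_k\rangle + i\langle\fomega_2,h_k\rangle$, which up to the $1/\sqrt 2$ normalization are precisely the coefficients in \eqref{e:planarGAF}.

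Next I would address the regularity, mirroring Proposition~\ref{p:symmetricPlanarGAF}. By \cite[Lemma 2.1.2]{HOUZ10} the sequences $(\langle\fomega_1,h_k\rangle)_k$ and $(\langle\fomega_2,h_k\rangle)_k$ are each i.i.d.\ standard real Gaussians, and they are independent of each other since $\bom\sim\mu_1\times\mu_1$. Hence the complex coefficients $\frac{1}{\sqrt2}(\langle\fomega_1,h_k\rangle + i\langle\fomega_2,h_k\rangle)$ are i.i.d.\ standard \emph{complex} Gaussians. I would then invoke the same tail/growth estimate used before, i.e.\ the first part of \cite[Lemma 2.2.3]{HKPV09}, to conclude that \eqref{e:planarGAF} defines an entire function with $\mu_1\times\mu_1$-probability $1$. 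As in the real case, combining the $L^2(\mu_1\times\mu_1)$ convergence of the series with the almost sure convergence to an entire function forces the two limits to coincide (both imply convergence in probability), which is what lets us pass from the pointwise $L^2$ identity to an identification of the zero sets.

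The main obstacle, as flagged in Remark~\ref{r:zeros}, is the same subtlety that arises in the real case: the equality in the Hermite expansion holds pointwise as an $L^2(\mu_1\times\mu_1)$ limit for each fixed $(u,v)$, and one cannot directly equate the zeros of an $L^2$-limit random field with the zeros of the almost-sure entire limit. The resolution is not a new computation but the convergence-in-probability argument just described, together with the fact that the multiplicative prefactor is non-vanishing; care is needed only to state explicitly that removing this prefactor preserves the zero set. Everything else is routine: the linearity of the Hermite decomposition over the two noises, and the upgrade from two independent real Gaussian coefficient sequences to a single complex Gaussian sequence, are both immediate once Propositions~\ref{p:series} and \ref{p:symmetricPlanarGAF} are in hand.
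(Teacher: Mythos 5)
Your proposal is correct and matches the paper's own argument: the paper proves Proposition~\ref{p:planarGAF} precisely by invoking ``the same arguments as in the proofs of Propositions~\ref{p:series} and \ref{p:symmetricPlanarGAF}'', i.e.\ the Hermite decomposition applied to each noise component, the observation that $2^{-1/2}(\langle \fomega_1,h_k\rangle + i\langle \fomega_2,h_k\rangle)$ are i.i.d.\ unit complex Gaussians, the entirety result from \cite[Lemma 2.2.3]{HKPV09}, and the identification of the $L^2$ and almost-sure limits via convergence in probability. Your explicit treatment of the non-vanishing prefactor and the zero-set identification is, if anything, slightly more detailed than the paper's one-line proof.
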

We note that under $\mu_1\times\mu_1$, the random variables $2^{-1/2}(\langle \fomega_1,h_k \rangle + i\langle \fomega_2, h_k
  \rangle)$ are i.i.d. unit complex Gaussians, and the entire function
  \eqref{e:planarGAF} is called the planar Gaussian analytic function in the
  literature. In particular, the planar
  GAF is one of the three fundamental GAFs in the monograph of \cite{HKPV09}, and more is known about its zeros than for the symmetric planar GAF in
  Proposition~\ref{p:symmetricPlanarGAF}. We group some known results in
  Proposition~\ref{p:propertiesOfPlanarGAF}, selecting results that could be of statistical use in signal processing.
\begin{prop}[\cite{HKPV09,Nis10}]
The planar GAF satisfies the following properties:
\begin{enumerate} 
\item The distribution of its zeros is invariant to rotations
  and translations in the complex plane \cite[Proposition 2.3.7]{HKPV09}. In
  particular, it is a stationary point process.
\item Its correlation functions are known \cite[Corollary 3.4.2]{HKPV09}. In
  particular, the intensity in constant equal to $1$, and with the notation of Section~\ref{s:spatial}, for $z,z'\in\mathbb{C}$ such that $\vert z-z'\vert=r$, the pair correlation function reads
\beq
\label{e:pairCorrelation}
\rho^{(2)}(z,z') = g_0(r) = 
\frac{\left[\sinh^2\left(\frac{\pi r^2}{2}\right) + \frac{\pi^2r^4}{4}\right]\cosh\left(\frac{\pi r^2}{2}\right) - \pi r^2\sinh(\frac{\pi r^2}{2})}{\sinh^3\left(\frac{\pi r^2}{2}\right)}.
\eeq
\item The hole probability 
$$
p_r = \mathbb{P}(\text{no points in the disk
    centered at $0$ and with radius $r$})
$$
scales as 
\beq
r^{-4}\log p_r \rightarrow -3e^{2}/4
\label{p:holeProba}
\eeq
as $r\rightarrow+\infty$ \citep{Nis10}.
\end{enumerate}
\label{p:propertiesOfPlanarGAF}
\end{prop}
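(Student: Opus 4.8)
The plan is to reduce all three items to the covariance kernel of the planar GAF and then invoke the corresponding results of \cite{HKPV09} and \cite{Nis10}; only the last item needs genuine analysis beyond a kernel computation, and that analysis is precisely what is cited. Writing $f(z)$ for the series \eqref{e:planarGAF}, the first thing I would record is its covariance kernel. Using that the coefficients $2^{-1/2}(\langle\fomega_1,h_k\rangle + i\langle\fomega_2,h_k\rangle)$ are i.i.d.\ standard complex Gaussians,
$$
K(z,w) = \mathbb{E}\left[f(z)\overline{f(w)}\right] = \sum_{k=0}^\infty \frac{\pi^k (z\bar w)^k}{k!} = e^{\pi z\bar w}.
$$
The structural fact I would lean on throughout is that the law of a GAF, and hence the distribution of its zero set, is determined entirely by $K$: two GAFs with equal covariance are equal in distribution, and multiplying a GAF by a deterministic non-vanishing entire factor leaves its zero set unchanged.

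For the first item I would check invariance directly at the level of $K$. Rotation invariance is immediate, since $K(e^{i\theta}z, e^{i\theta}w) = e^{\pi z\bar w} = K(z,w)$, so $f(e^{i\theta}\cdot)$ has the same law as $f$. For translations, $K(z+a,w+a) = \lambda(z)\overline{\lambda(w)}K(z,w)$ with $\lambda(z) = e^{\pi\bar a z + \pi|a|^2/2}$ a non-vanishing entire function; hence $f(\cdot+a)$ equals in distribution $\lambda\cdot g$ for a GAF $g$ with kernel $K$, and multiplication by $\lambda$ does not move zeros. This is exactly the content of \cite[Proposition 2.3.7]{HKPV09}, which I would cite for the rigorous statement, and stationarity follows.

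For the second item the intensity comes from the Edelman--Kostlan formula $\rho^{(1)}(z) = \frac{1}{4\pi}\Delta \log K(z,z)$. Since $\log K(z,z) = \pi|z|^2$ has constant Laplacian $4\pi$, this gives $\rho^{(1)}\equiv 1$. The higher correlations follow from the Kac--Rice formula for zeros of GAFs, which expresses $\rho^{(k)}(z_1,\dots,z_k)$ as a conditional expectation of $\prod_i |f'(z_i)|^2$ given $f(z_1)=\dots=f(z_k)=0$, all computable from the Gaussian joint law of $(f(z_i), f'(z_i))_i$ determined by $K$ and its derivatives. Specializing to $k=2$ and carrying out the lengthy but routine Gaussian computation yields \eqref{e:pairCorrelation}; I would simply invoke \cite[Corollary 3.4.2]{HKPV09} for the closed form rather than reproduce it, and the stationarity and isotropy established in the first item justify writing $\rho^{(2)}$ as a function of $r=|z-z'|$ alone.

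The third item is the one step that is not a kernel manipulation, and it is the main obstacle: the hole-probability asymptotics \eqref{p:holeProba} require a large-deviations analysis of the zero-counting measure of the planar GAF, quantifying the cost of suppressing all zeros in a large disk. This is the hard theorem of \cite{Nis10}, which I would cite directly; the constant $-3e^2/4$ emerges from solving the associated variational and potential-theoretic problem, which is well beyond a proof sketch.
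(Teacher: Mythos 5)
Your proposal is correct and takes essentially the same route as the paper, which in fact offers no proof of Proposition~\ref{p:propertiesOfPlanarGAF} at all: the three items are stated as results imported directly from \cite{HKPV09} and \cite{Nis10}, exactly the reductions you perform, and your supporting computations (the covariance kernel $K(z,w)=e^{\pi z\bar w}$, the factorization $K(z+a,w+a)=\lambda(z)\overline{\lambda(w)}K(z,w)$ with $\lambda$ non-vanishing entire, Edelman--Kostlan for the unit intensity, and Kac--Rice/Hannay for $\rho^{(2)}$) are the standard arguments underlying those citations and are all accurate. One caveat, which you share with the paper: \cite{Nis10} states the hole asymptotics for the GAF $\sum_k \xi_k z^k/\sqrt{k!}$, whose zero set has intensity $1/\pi$, so transferring the result to the unit-intensity normalization \eqref{e:planarGAF} requires the rescaling $z\mapsto \sqrt{\pi}z$ (a hole of radius $r$ here is a hole of radius $\sqrt{\pi}r$ there, changing the limiting constant by a factor $\pi^2$), a normalization bookkeeping step that a bare citation leaves unchecked.
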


\begin{figure}
\subfigure[Pair correlation functions $g_0$]{
\includegraphics[width=\twofig]{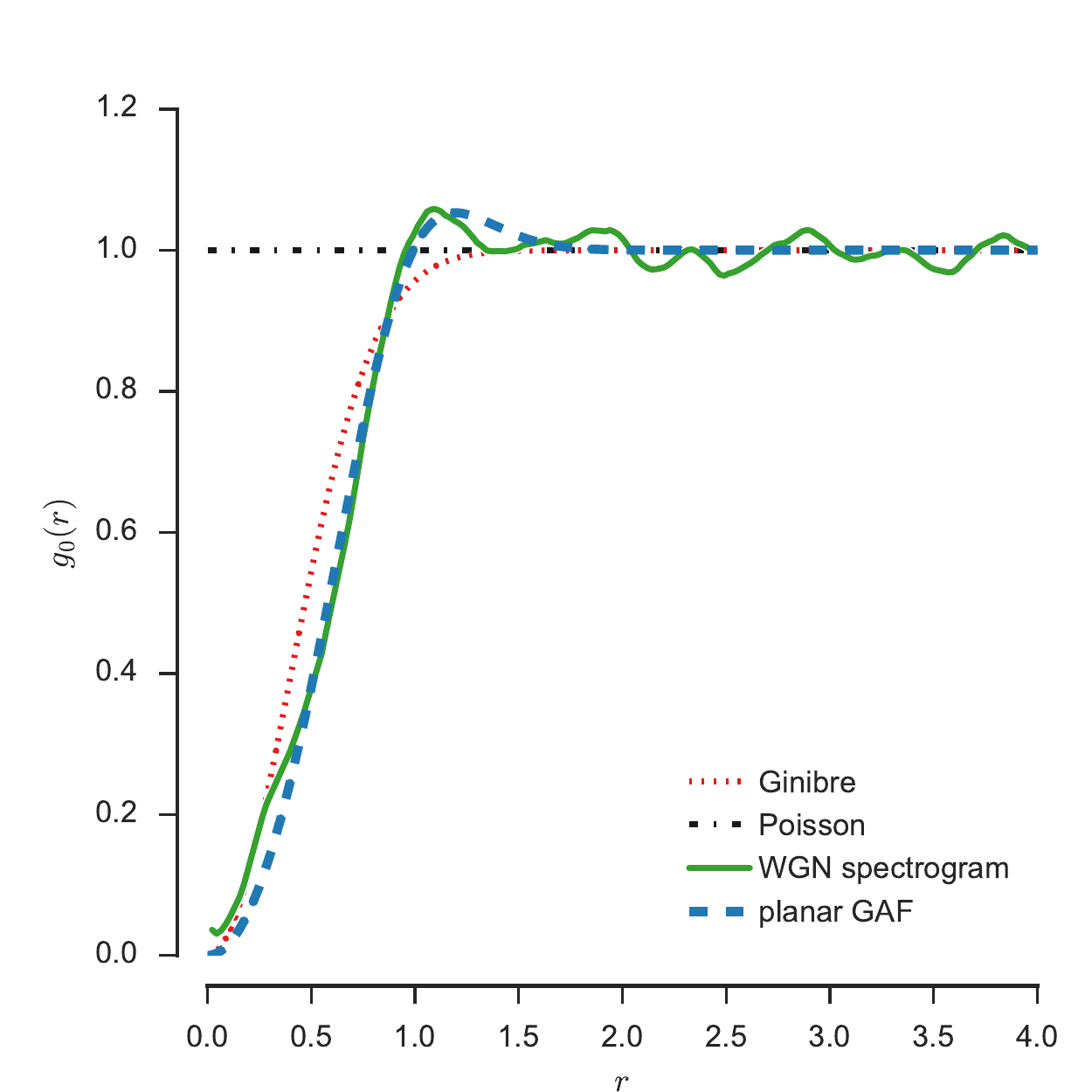}
\label{f:rho}
}
\subfigure[$r\mapsto L(r)-r$ functional statistics]{
\includegraphics[width=\twofig]{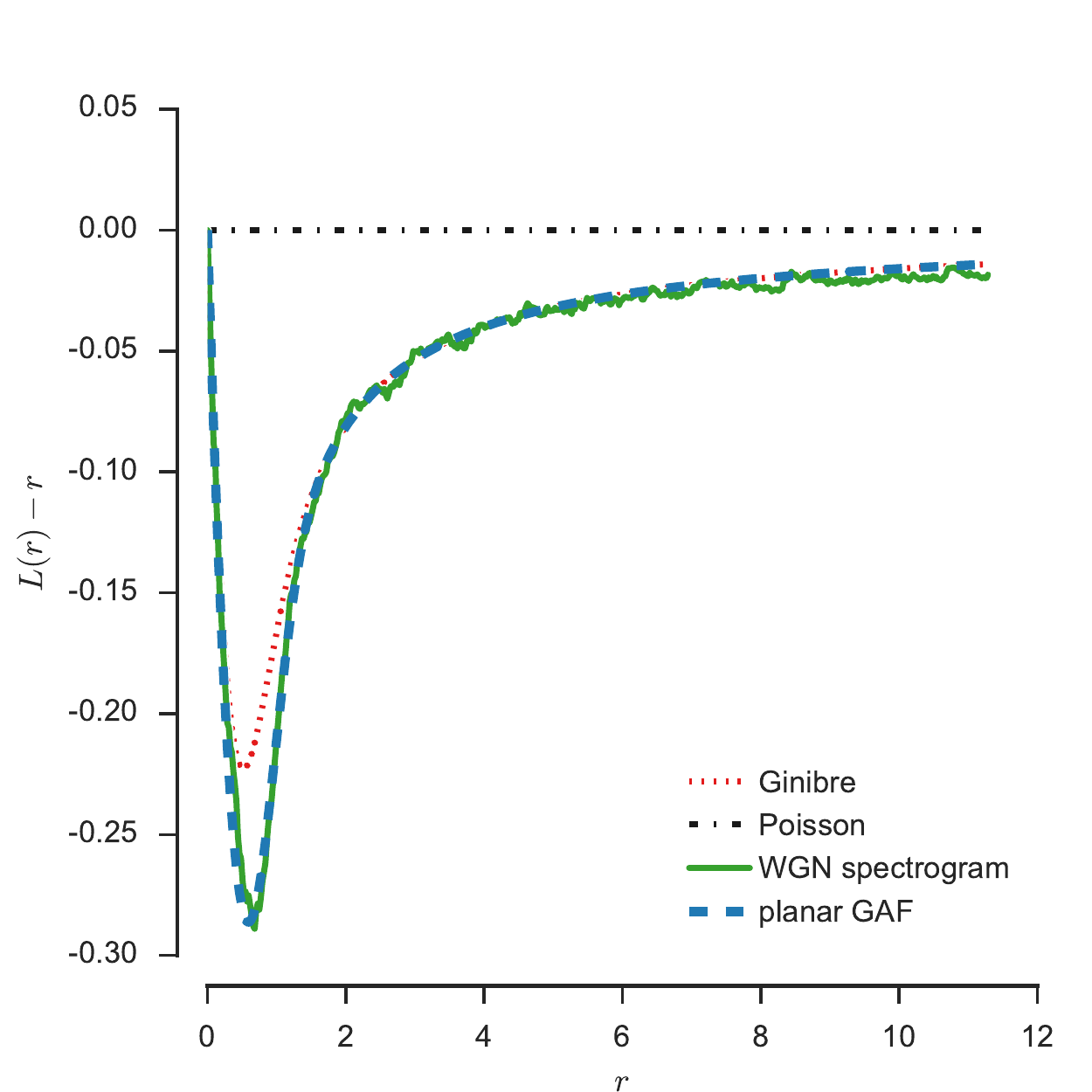}
\label{f:L}
}
\caption{Comparison of the Ginibre point process, the zeros of the planar GAF,
  and a realization of the zeros of the spectrogram of complex white noise, using
  (a) pair correlation functions $g_0$, and (b) the $L$ functional statistic, see Section~\ref{s:spatial} for definitions.}
\label{f:functionalStatistics}
\end{figure}

Figure~\ref{f:functionalStatistics} illustrates
Proposition~\ref{p:propertiesOfPlanarGAF}. We plot the pair correlation function
\eqref{e:pairCorrelation} of the planar GAF, along with the pair correlation
functions of the Poisson and Ginibre point processes introduced in
Section~\ref{s:spatial}. We also superimpose an estimate of $g_0$
obtained from the spectrogram of a realization of a complex white noise, see
Section~\ref{s:stats} for computational procedures. Finally, we also plot the
$L$ functional statistic for the same point processes, as introduced in Section~\ref{s:spatial}.

Both the planar GAF and Ginibre are repulsive at small scales, but the planar
GAF alone has a small ring of attractivity around $r=1$, well visible in
Figure~\ref{f:rho}. This implies that the zeros of the planar GAF cannot be a
DPP with Hermitian kernel, as introduced in
Section~\ref{e:ginibre}, unlike what we and \cite{Fla17} may have intuited. DPPs
were indeed a good candidate for the zeros, as they are repulsive point
processes and naturally relate to reproducing kernel Hilbert spaces, such as
those behind the STFT \cite[Theorem 3.4.2]{Gro01}. But the zeros of the planar
GAF show no repulsion at large scales, and more importantly the pair correlation
function \eqref{e:pairCorrelation} is larger than $1$ around $r=1$, while the
pair correlation of a DPP with Hermitian kernel cannot exceed 1 by definition
\eqref{e:dppCorrelationFunctions}. Note that strictly speaking, it is still
possible that the zeros of the planar GAF are a DPP with a non-Hermitian kernel.

Even if they are not a DPP with Hermitian kernel, the zeros of the planar GAF
are often compared to the Ginibre ensemble, which is a DPP and is also invariant
to isometries of the plane \cite[Section 4.3.7]{HKPV09}. In particular, the decay of the log hole probability \eqref{p:holeProba} is also
in $r^{4}$ for the Ginibre process \cite[Proposition 7.2.1]{HKPV09}. This is to
be compared to the slower decay in $r^2$ of a Poisson process with constant
rate. This is an indication that locally, the zeros of the planar GAF and the
Ginibre ensemble are similarly rigid or regularly spread, and that both are more
rigid than Poisson. There are other intriguing similarities between the two
point processes, see \citep{KrVi14}, where Ginibre is shown to be the zeros of a
GAF with a randomized kernel.

\subsection{The zeros of the planar GAF approximate those of the symmetric
  planar GAF}
  \label{s:GAFapproxSYM}
  
To sum up, the spectrogram of real white noise is described by the symmetric
planar GAF, but the zeros of the planar GAF are more amenable to further
statistical processing. In this section, we survey results by \cite{Fel13} and
\cite{Pro96} that support approximating the zeros of the symmetric planar GAF by
those of the planar GAF.

The zeros of the symmetric planar GAF \eqref{e:symmetricPlanarGAF} have the same
distribution as the zeros of
\beq
f_{\sym}(z) = e^{-\frac{\pi}{2}z^2}\sum_{k=0}^\infty
\frac{a_k}{\sqrt{k!}}\pi^{k/2}z^k,
\label{e:scaledSymmetricGAF}
\eeq
where $a_k$ are i.i.d. unit real Gaussians. Note that the covariance kernel of $f_\sym$ is
\begin{eqnarray*}
K_{\sym}(z,w) &\triangleq& \mathbb{E} f_\sym(z)\overline{f_\sym(w)}\\
&=& e^{-\frac{\pi}{2}z^2} e^{-\frac{\pi}{2}\bar{w}^2} e^{\pi z\bar{w}}\\
&=& e^{-\frac{\pi}{2}(z-\bar{w})^2}.
\end{eqnarray*}
This hints some invariance of $f_\sym$ to translations along the real axis. By a limiting argument, see e.g. \cite[Lemma
  2.3.3]{HKPV09}, \eqref{e:scaledSymmetricGAF} is indeed a stationary symmetric
GAF in the sense of \cite{Fel13}. Namely, for any $n$, any
$z_1,\dots,z_n$, and any $t\in\mathbb{R}$,
$\left(f_\sym(z_1+t),\dots,f_\sym(z_n+t)\right)$ has the same distribution as
$\left( f_\sym(z_1),\dots,f_\sym(z_n) \right)$.

\cite{Fel13} derives the intensity of the zeros of general stationary
symmetric GAFs. More precisely, let $n_{\sym}(B)$ be the random number of zeros
of $f_\sym$ in a Borel set $B\subset\mathbb{C}$, she says that there exists a
so-called \emph{horizontal counting measure} $\nu_\sym$ s.t., almost surely, we have the weak convergence of measures
$$
\nu_\sym(A) = \lim_{T\rightarrow\infty}\frac{n_\sym([0,T]\times A)}{T},
$$
where $A$ is a Borel set on the vertical axis. In other
words, $\nu_\sym$ characterizes the density of zeros averaged across the
horizontal axis. For our symmetric planar GAF \eqref{e:scaledSymmetricGAF},
\cite[Theorem 1]{Fel13} yields
\begin{eqnarray}
\nu_\sym(A) &=& \int_{A} \left[ dS(y) + \delta_0\right],
\label{e:countingMeasureSym}
\end{eqnarray}
where
$$
S(y)  = \frac{y}{\sqrt{1-e^{-4\pi y^2}}}.
$$
Equation \eqref{e:countingMeasureSym} is the sum of a continuous component and a
Dirac mass at $0$. The Dirac mass relates to the accumulation of zeros on the
real axis discussed in Section~\ref{s:real}. The numerator of the continuous
part $S$ is the unnormalized cumulative density of a uniform distribution, and the
denominator quickly converges to $1$ as $y$ grows. 

Now compare \eqref{e:countingMeasureSym} to the horizontal
counting measure of the zeros of the planar GAF, which is simply the uniform
$dy$, without any atom, see e.g. \cite[Theorem 1]{Fel13} again. We observe that the two counting
measures are quickly approximately equal, as one goes away from the real axis.
More precisely, for $A\subset [1,+\infty)$, the ratio of $S(A)$ by the Lebesgue
measure of $A$ is within $2\cdot 10^{-6}$ of $1$. For Gaussian windows of
arbitrary width, the change of variables \eqref{e:complexTiling} yields that
the approximation is tight for $\text{Im}(z)\geq a$. This is no obstacle in
signal processing practice, as spectrograms are never considered close to the
real axis, where 'close' is defined by the spread of the observation window in
frequency, which is of order $a$, see Section~\ref{s:analytic}. We also plot the
densities of the continuous part of both measures in Figure~\ref{f:GAFs}. The
Dirac mass of the symmetric planar GAF corresponds to the subset of zeros on the
real axis. 

A natural question is whether the approximation is also accurate for
higher-order interactions in the two point processes. This question can be
addressed by comparing $k$-point correlation functions. The case of the planar
GAF was derived by \cite{Han98}, and closed-form formulas are derived for the
symmetric planar GAF in \cite[Equation (12)]{Pro96}. The latter are not easy to
interpret as they involve nonstandard combinatorial combinations of matrix
coefficients. Still, \cite[Equation 25]{Pro96} shows that when $\text{Im}(z)\gg 0$, the
$k$-point correlation functions of the zeros of the symmetric planar GAF are
well approximated by those of the zeros of the planar GAF. 

To conclude, the distribution of the zeros of the STFT of real white Gaussian
noise is well approximated by that of complex white Gaussian noise, as long as
the observation window is sufficiently far from the time axis. 


\subsection{On the analytic white noise}
\label{s:analytic}
A real-valued function $f\in L^2$ has an Hermitian Fourier transform. In signal
processing, it is thus common to cancel out the negative frequencies of a
real-valued signal $f \in L^2$ by defining a complex-valued associated function called its \emph{analytic signal},
\begin{equation}
f^+(x) = 2 \cF^{-1}(\IND_{\mathbb{R}_+}\cF f)(x), \forall x\in\mathbb{R}.
\end{equation}
where $\cF$ is the usual Fourier transform. The term ``analytic'' is related to
the alternative definition of $f^+$ as the boundary function of a particular holomorphic function on the lower half of the
complex plane, see e.g. \cite[Section 2.1]{Pug82} for a concise and rigorous treatment. In signal processing practice, beyond removing redundant
frequencies, the modulus and argument of $f^+$ have meaningful interpretations
for elementary signals
 \citep{Pic97}. Since our initial goal is to understand the behaviour of the zeros of a
real white noise, it is thus tempting to define and consider an
analytic white noise to represent this real white noise. If this approach led to
a simple statistical characterization of zeros, then we would avoid the
approximation by the complex white noise of Section~\ref{s:complexWhiteNoise}.

While folklore has it that the analytic white noise is the circular white noise of Section~\ref{s:complexWhiteNoise}, this is not the case for the most natural
  definition of the analytic signal of a distribution. Following
  \citep[Section 3.3]{Pug82}, we define in this paper the analytic white noise by its action on $L^2$: letting $\fomega\sim\mu_1$ be a real white noise\footnote{As a side
    note, \cite[Section 3]{Pug82} investigates the random field that would be the formal equivalent to the holomorphic continuation of the classical analytic
signal of a function in $L^2$. But this time, the limit on the real axis is
rather ill-behaved.}, we take
\begin{equation}
\langle \fomega^+,f \rangle \triangleq 2 \langle \fomega,
\cF^{-1}(\IND_{\mathbb{R}_+}\cF f)\rangle, \qquad\forall f\in L^2.
\label{e:analyticWhiteNoise}
\end{equation}
For our purpose, it is enough to consider $\fomega^+$
through its action \eqref{e:analyticWhiteNoise}. In particular, if we want to
follow the lines of Sections~\ref{s:real} and \ref{s:complex} and identify the general term of
a random series corresponding to the STFT of
$\fomega^+$, we need an orthonormal basis $(\zeta_k)$ of $L^2$ and a window $g$ such that 
\begin{equation}
\langle \zeta_k, \cF^{-1}(\IND_{\mathbb{R}_+}\cF M_vT_u g)\rangle 
\label{e:startingPoint}
\end{equation}
is known in closed-form and simple enough. Hermite functions and the Gaussian
window definitely do not satisfy our criteria anymore, and we leave this existence as an
open question. Still, we have the following heuristic argument: when $g$ is the
unit-norm Gaussian, \eqref{e:startingPoint} becomes
\begin{equation}
\langle \zeta_k, \cF^{-1}(\IND_{\mathbb{R}_+}T_vM_{-u} g)\rangle,
\label{e:desiderata}
\end{equation}
so that when $v$ is
large enough, say a few times the width of the window $g$, $T_v M_{-u} g$ puts
almost all its mass on $\mathbb{R}_+$, and the indicator in
\eqref{e:desiderata} can be dropped. The Hermite basis then satisfies our
requirements, giving the planar GAF of Section~\ref{s:complex}. Intuitively, far
from the real axis, the spectrogram of the analytic white noise will look
like that of proper complex white noise. This heuristic is to relate to standard
time-frequency practice, where one leaves out of the spectrogram a band that is
within the width of the window of the lower half plane. This is meant to avoid
taking into account both positive and negative frequencies of the signal
simultaneously.


\section{Practical spatial statistics using the zeros of the STFT}
\label{s:stats}
In Section~\ref{s:implementation}, we discuss how to relate the continuous
complex plane $\mathbb{C}$ with the practical discrete implementation of the
Fourier transform. In Section~\ref{s:detection}, we investigate simple
hypothesis tests for signal detection, as in \citep{Fla15}. 

\subsection{Going discrete}
\label{s:implementation}
To fully bridge the gap with numerical signal processing practice, there is an
additional level of approximation that needs to be discussed: Continuous
integrals are replaced by discrete Fourier transforms, so that the fast Fourier
transform can be used. We first describe an experimental setting to study the zeros of the spectrogram of Gaussian white noise. In particular, we explain how to reach an asymptotic regime where the noise occupies an infinite range both in time and frequency and the spectrogram is infinitely well resolved.
Second, we investigate practical issues related to detecting a signal in white
noise by using its influence on the distribution of zeros of the spectrogram. 


\subsubsection{Zeros of noise only}
\label{s:noiseonly}
Let $F_s$ the sampling frequency, $\Delta t=1/F_s$ the time sampling step size
and $T$ the duration of the observation window. The number of samples is then $N +1$ with 
$N = T/\Delta t$. 


Let $K$ be
the length of the discretized Gaussian analysis window, i.e. its duration is
$K\Delta t$; therefore $\Delta \nu = F_s/K=1/K\Delta t$ is the frequency
sampling step. In practice, the spectrogram obtained from a discrete STFT is
then an array of size $(N+1, K/2+1)$. Then we consider the time-frequency domain
$[0,T]\times[0,F_s/2]$ only; it corresponds to the analytic signal. This is due
to the Hermitian symmetry of the Fourier transform of real signals: negative
frequencies do not add any information to that carried by positive frequencies,
see also Section~\ref{s:analytic}. This Hermitian symmetry can also be seen on
the zeros of the symmetric GAF in Figure~\ref{f:symmetricPlanarGAF}, where
signal processing practice would have us only consider the upper half-plane ($\nu\geq 0$).
 From \cite{Fel13}'s results, see (\ref{e:countingMeasureSym}), we know that the expected number of zeros of the continuous spectrogram is close to $TF_s/2$ if we neglect the (asymptotically negligible) region $|\nu|\leq a$ close to the time axis, see Section~\ref{s:GAFapproxSYM}. Assuming that we are able to extract every zero, the expected number of zeros in the discrete spectrogram is then $TF_s/2=N/2$ in very good approximation.

Let $\sigma_t=1/(a\sqrt{2\pi})$ and $\sigma_\nu= 1/(2\pi\sigma_t)$
denote the spreads of the Gaussian analysis window $g_a$ in time and frequency, respectively. Note that the scale $a$ serves as a fixed reference for scales in the sequel. We would like to retain the stationary properties of the planar GAF in our discrete
STFTs. We thus require that, in the discrete setting, the resolution --~in number of points~-- should be the same in time and frequency, that is
\begin{equation}
\label{eq_isotropy}
	\frac{\sigma_t}{\Delta t} = \frac{\sigma_\nu}{\Delta \nu} \Longleftrightarrow \sigma_t \cdot F_s = \sigma_\nu \cdot K\Delta t
\end{equation}
This leads to
\begin{equation}
	\left(\frac{\sigma_t}{\Delta t}\right)^2 = \frac{K}{2\pi} \Leftrightarrow \sigma_t = \sqrt{\frac{K}{2\pi}}\Delta t.
\end{equation}
If we want to study the spectrogram of continuous white noise over an infinite time-frequency domain, numerical simulations must obey two necessary conditions:
\begin{equation}
	\begin{cases}
		\text{infinite duration } \Leftrightarrow \text{fine frequency resolution} & : T/\sigma_t = 2\pi\sigma_\nu/\Delta\nu \rightarrow + \infty\\
		\text{infinite frequency range }  \Leftrightarrow \text{fine time resolution} & : F_s/\sigma_\nu=2\pi\sigma_t/\Delta t \rightarrow + \infty\\
	\end{cases}
\end{equation}
In terms of samples, these two conditions imply that $N, K \rightarrow \infty$. More precisely, 
\begin{eqnarray}
	\frac{\sigma_t}{T} & = & \frac{1}{N}\sqrt{\frac{K}{2\pi}} \rightarrow 0 \text{ as } N, K \rightarrow \infty\\
	\frac{\sigma_\nu}{F_s} & = & \frac{1}{\sqrt{2\pi K}}\to 0 \text{ as } N, K \rightarrow \infty.
\end{eqnarray}

These conditions are directly satisfied for $K \propto N$, where $\propto$ means
``proportional to''. Note that in practice because of border effects one chooses $N = 2K$ and keeps the $N$ samples whose time index $n$ is such that $K/2 \leq n \leq N - K/2$. Then, $\sigma_\nu/F_s=1/\sqrt{2\pi K}\propto 1/\sqrt{N}$, $\sigma_t/T\propto 1/\sqrt{N}$; note that $\Delta t/\sigma_t=\Delta \nu/\sigma_\nu\propto 1/\sqrt{N}$ as well. As a result, simulations can asymptotically well approximate the continuous spectrogram of Gaussian white noise.

\begin{figure}
\begin{center}
	\includegraphics[height=60mm]{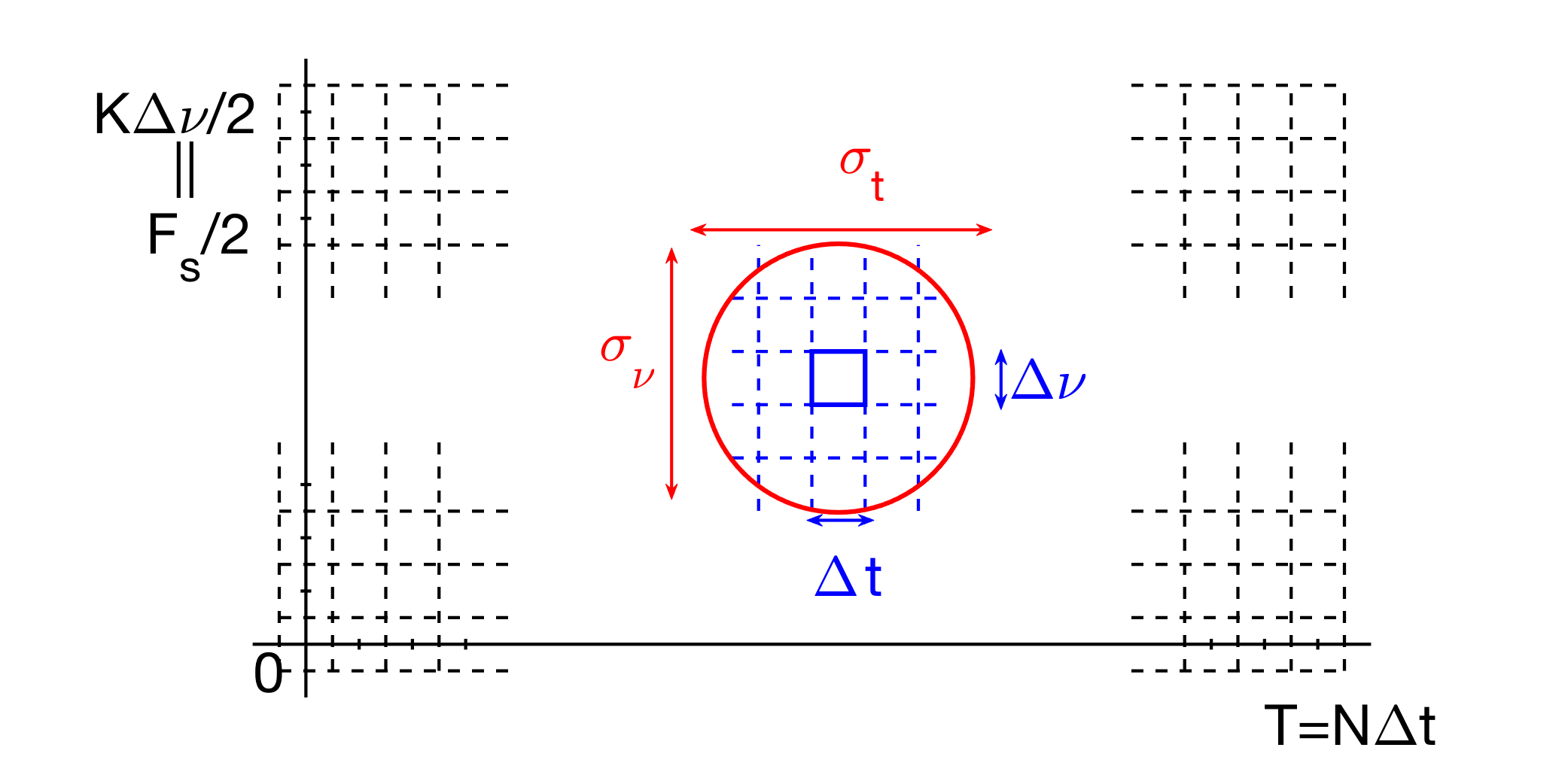}
	\caption{\label{fig_discrete} Illustration of the discrete time-frequency plane $\{(n\Delta t, k\Delta \nu),\; 0\leq n\leq N-1,\; 0\leq k\leq K/2\}$. The resolution of the spectrogram is controlled by the analysis window's Gabor parameters $(\sigma_t, \sigma_\nu)$.}
\end{center}
\end{figure}
\begin{figure}
	\centering
	\includegraphics[height=80mm]{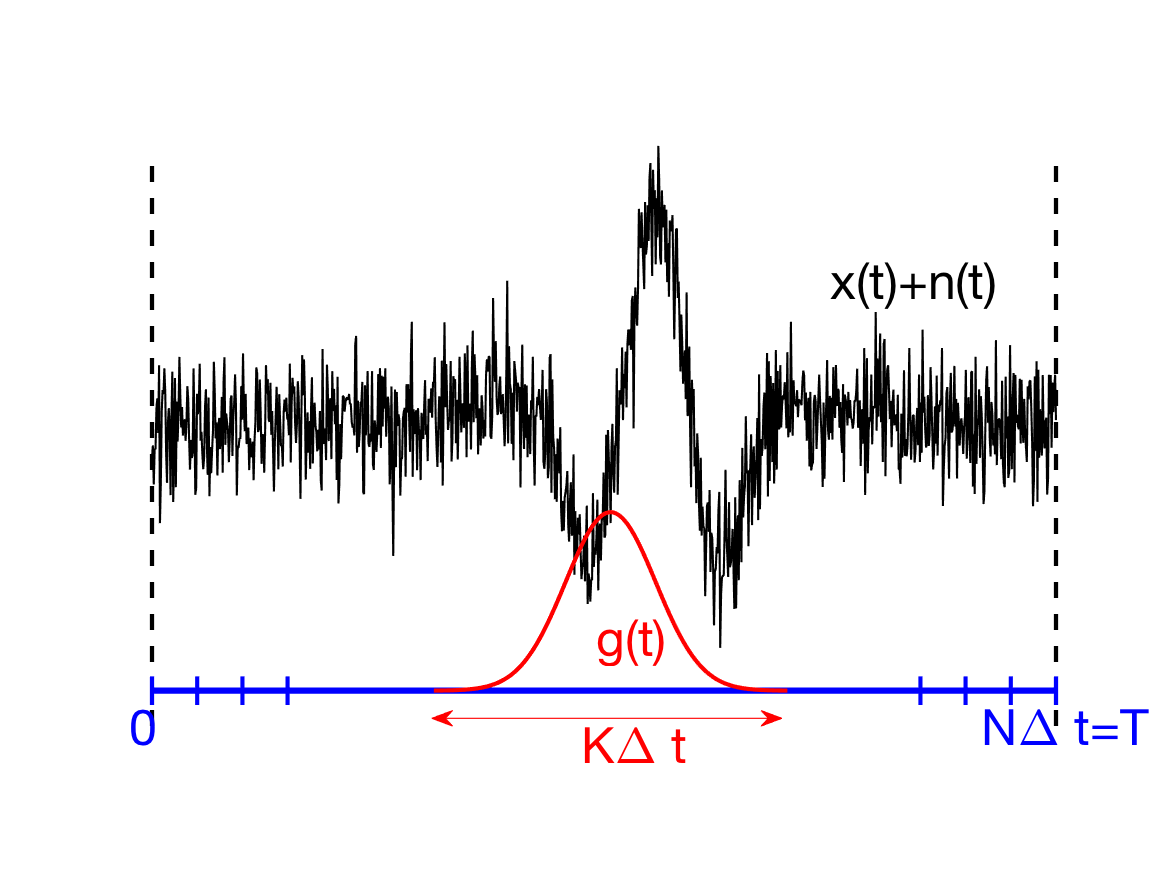}
\caption{\label{fig_stft} Illustration of the STFT: the noisy signal is convolved with a Gaussian
that is translated in time and frequency. The colour code corresponds to
Figure~\ref{fig_discrete} for ease of reference.}
\end{figure}{}

Figure~\ref{fig_discrete} illustrates the relative scales of the duration $T =
N\Delta t$, the frequency range $K/2\Delta t$ (for $\nu\geq 0$), the time and
frequency resolutions $\Delta t$ and $\Delta \nu$, as well as the resolution of
the time-frequency kernel corresponding to the window $g(t)$ with Gabor spread
$(\sigma_t, \sigma_\nu)$. For the sake of completeness and the reader new to
time-frequency, we include in Figure~\ref{fig_discrete} an illustration of the
STFT of a noisy signal.

Now we detail how to relate the discrete coordinates of a discrete spectrogram
with the continuous complex plane. For a given value of $a$, one has
$\sigma_t=1/(a\sqrt{2\pi})$ and thus making the correspondence between samples
and time-frequency units implies setting $\Delta t = \sqrt{2\pi/K}\sigma_t$. For
$a = 1$ one has $\Delta t = \sqrt{1/K}$ so that $u = n/\sqrt{K}$ and $v =
k/\sqrt{K}$ are the coordinates of the time-frequency plane corresponding to
time sample $n$ and frequency sample $k$, respectively.
Figure~\ref{fig:edge_effects} depicts the whole numerical simulation procedure.
It represents the simulated spectrogram and the corresponding extracted area,
taking border effects in consideration. The bound $\ell$ fixes how many samples
close to the zero-frequency axis should be removed. For $a=1$, we have chosen
$\ell = \sqrt{K}$, at it corresponds to $y = 1$ in (\ref{e:countingMeasureSym}).
Note also that border effects alone would actually allow us to extend the shaded
square in Figure~\ref{fig:edge_effects} on its left and right to include $K$
samples. Instead, we chose to reduce it to $K/2-\ell$ mostly for esthetical concerns:
since the point process we observe is almost stationary when only noise is
present, we favoured a square window rather than a rectangle.

\begin{figure}
	\centering 
	\includegraphics[width=\textwidth]{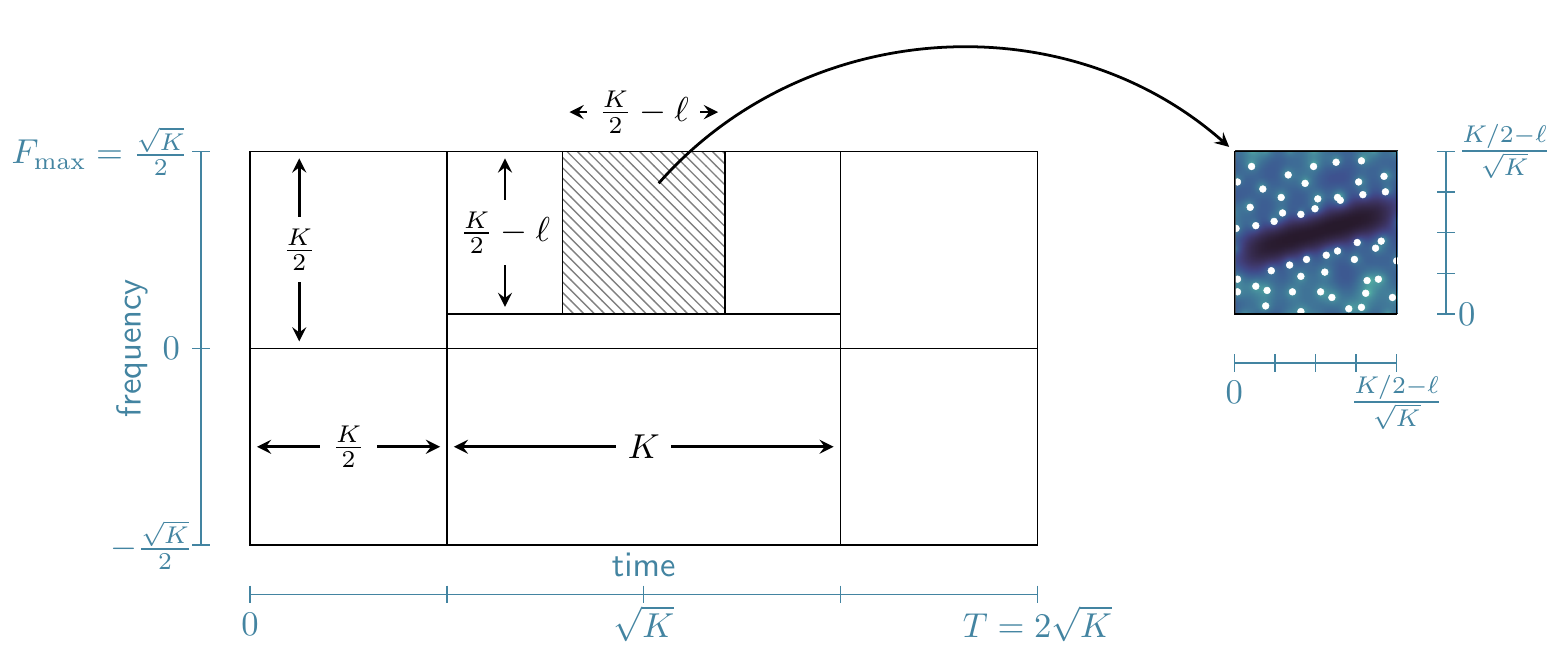}
\caption{Numerical
    simulation procedure. Black ticks indicate the number of samples, while blue
    ticks show time-frequency units for a choice of $\Delta t = 1/\sqrt{K}$ (see
    text for details). In other words, blue ticks are the coordinates in the
    complex plane that are implicit in the mathematical results of
    Sections~\ref{s:real} and \ref{s:complex}. The dashed region corresponds to the area used in subsequent simulations.}\label{fig:edge_effects}
\end{figure}

When the conditions above are satisfied, several phenomena occur in the limit of
infinite oversampling $N\to\infty$, which is equivalent to letting both the
duration $T$ and the sampling
frequency $F_s$ grow to infinity. In a dual manner, the resolution $(\Delta t, \Delta \nu)$ of the
discrete spectrogram tends to zero. The time-frequency extent $(\sigma_t, \sigma_\nu)$ of
the analysis window remains constant but is described by a number of samples that grows as $\sigma_t/\Delta t \propto\sqrt{N}$ while $\sigma_t/T\propto 1/\sqrt{N}\to 0$. The analysis window is thus more and more finely resolved, and
we become close to a continuous description.
In parallel, the expected number of zeros in the spectrogram of the white noise is $F_sT/2$ and
tends to $\infty$ as $N$ grows. Therefore, assuming perfect zero detection, statistics such as Ripley's $K$ function
or the variance-stabilized $L$ functional statistic of Section~\ref{s:LAndK} can
be asymptotically perfectly well estimated. 

In practice, we defined a numerical
zero as a local minimum among its eight neighbouring bins, and found that the
number of zeros was consistent with what we expected from
Proposition~\ref{p:propertiesOfPlanarGAF}, even if we did not impose a threshold
on the value of the spectrogram at the local minimum. 

We leave this section on a mathematical note. In this section, we implicitly assumed that in the limit on an infinite
observation window and an infinite sampling rate, the discrete Fourier
transforms involved in the computation of the discrete spectrogram converge to
their continuous counterpart. For the sake of completeness, we mathematically
justify in what sense this convergence can be expected. With the notation of Section~\ref{s:real},
subdivide again $[0,T]$ into $N$ equal intervals and denote by $\chi_{n}$ the indicator of the $n$th
interval $[(n-1)\Delta t, n\Delta t]$. Let $P_{N,T}:\cS\rightarrow L^2$ attach
to a Schwartz function $f$ the ``sampled'' simple function $\sum_{n=1}^N
f(n)\chi_n$. Then $P_{N,T}f\rightarrow f$ in $L^2$ as $T$ and $N$ go to infinity
and $T/\sqrt{N}\rightarrow \alpha>0$, which is the setting described above in
this section. On the other hand, 
\begin{equation}
\label{e:dft}
\langle \fomega, P_{N,T}M_vT_u g\rangle = \sum_{n=1}^N\langle \fomega,\chi_n
\rangle e^{-2i\pi v n\Delta t} g(n\Delta t-u)
\end{equation}
is what we call the discrete STFT at $(u,v)$ of a realization of white
noise. Note that in distribution, $(\langle \fomega,\chi_n \rangle)_n$ is a
sequence of i.i.d. Gaussians with variance $\Delta t$. To see how \eqref{e:dft} is a good
approximation to our initial continuous STFT, we note that for all $u,v$,
\begin{eqnarray*}
\mathbb{E}_{\mu_1}\vert\langle \fomega,  M_{v}T_u g\rangle - \langle \fomega,
  P_{N,T} M_{v}T_u g\rangle\vert^2 &=& \mathbb{E}_{\mu_1}\vert\langle \fomega,
                                       M_{v}T_u g
  - P_{N,T} M_{v}T_u g\rangle\vert^2\\
&=& \Vert  M_{v}T_u g
  - P_{N,T} M_{v}T_u g \Vert^2_{L_2} \rightarrow 0.
\end{eqnarray*}

\subsubsection{Zeros of signal plus noise}
\label{s:signalplusnoise}

When a signal is present, its specific scales destroy the scale invariance
property of Gaussian white noise and deprives us from any asymptotic regime in
our numerical simulations. Let $A_S$ denote the typical time and frequency area
occupied by the considered signal. The presence of this signal creates a region
of the spectrogram of size $A_S$ where a decrease in the number of zeros is
expected due to the positive amount of energy corresponding to the signal. This
decrease is clearly visible in the spectrograms of Figure~\ref{f:testPower} for
linear chirps with various $A_S$ and various signal-to-noise ratios (SNR).
The approach proposed here to build statistical detection tests is based on this
intuition. To this purpose one needs to quantify how far the presence of a
signal can influence the statistics used in our tests so that we can maximize
this influence and the efficiency of the proposed test.

Given a sampling rate $F_s$ and a duration of observation $T$, the unit
intensity in Proposition~\ref{p:propertiesOfPlanarGAF} yields that the expected
number of zeros in the spectrogram of a real white noise is $F_s\cdot T/2 = N/2$, neglecting what happens at small frequencies close to the time axis. Note that this is independent of the width $(\sigma_t,\sigma_\nu)$ of the Gaussian
analysis window $g$. 
If one wants to increase the number of zeros in the spectrogram to get
better statistics, it is enough to increase either $F_s$ or $T$. However, the
expected decrease in the number of zeros due to the presence of a signal is of
the order of the area $A_S$, the finite time-frequency area $A_S$ corresponding
to the spectrogram of the signal alone. As a consequence, an excessive increase
in either $F_s$ and/or $T$ would result in an asymptotically complete dilution
of the influence of the signal on the considered statistics. Thus, our purpose
is to build statistics over one or more patches $P$ of the spectrogram of maximal area
$A_P=\eta_t\eta_\nu$ such that $A_S/A_P\simeq 1$. On one hand, a maximal area
$A_P$ is necessary to ensure that the estimate of the chosen statistic be as
accurate as possible (in particular in the presence of noise only, to take into
account as many zeros as possible and minimize the false positive detection
rate); on the other hand, this statistic will be more sensitive to the presence
of a signal if it mostly depends on the influence of the signal on the
distribution of zeros in the spectrogram (in particular, in the presence of signal, we maximize
the true positive detection rate). In practice, note that one can hope to detect only signals
such that $A_S\gg \sigma_t\sigma_\nu=1/2\pi$, which means signals with a
time-frequency support that affects more than $\sigma_t/\Delta t\cdot \sigma_\nu/\Delta \nu = K/2\pi$ samples of the spectrogram.



%



\subsection{Detecting signals through hypothesis testing}
\label{s:detection}

\subsubsection{Monte Carlo envelope tests}
\label{s:alpha}
In Section~\ref{s:LAndK}, we reviewed some popular functional statistics for
stationary isotropic point processes. We focus here on $L$, the
variance-stabilized version of Ripley's $K$ function, and the empty space
function $F$, see Section~\ref{s:spatial}. We follow classical Monte Carlo testing methodology
based on functional statistics, which we now sketch, see e.g. \citep{BDHLMN14}
for a less concise introduction. 

The methodology is independent of the test statistic used, so we introduce it
for a general functional statistic $r\mapsto S(r)$, which we later instantiate
to be $L$ or $F$. Let $\hat S$ denote an
empirical estimate obtained from the spectrogram of data, possibly using edge corrections, see \citep{MoWa03}.
Let $S_0$ be the theoretical functional statistic corresponding to complex white noise. For
$S=L$, $L_0$ can be easily computed from \eqref{e:pairCorrelation}. Note that
our noise is real white noise in the applications, but we approximate the corresponding
2-point correlation function by that of complex white noise far from the real axis, as explained in Section~\ref{s:GAFapproxSYM}. Detection of
signal over white noise can be formulated as testing the hypothesis $H_0$ that
$\hat S$ was built from a realization of a real white noise, versus the
alternate hypothesis $H_1$ that it was not. To do this, we review Monte Carlo
envelope-based hypothesis tests, which are popular across applications. 

In a Monte Carlo envelope test, we define a test statistic $T\in\mathbb{R}$ that
summarizes the difference $r\mapsto S(r)-S_{0}(r)$ in a
single real number, for instance a norm 
\begin{equation}
\label{e:norms}
T_\infty= \sup_{r\in [r_{\min},r_{\max}]}\vert S(r)-S_0(r)\vert\qquad\text{  or
}\qquad T_2=\sqrt{\int_{r_{\min}}^{r_{\max}} \vert S-S_0\vert^2}.
\end{equation} 
Let $t_{\text{exp}}$ denote the realization of $T$ corresponding to the experimental data to be
analyzed. The test consists in simulating $m$ realizations of white noise,
obtaining the corresponding functional statistics estimates $S_1,\dots,S_m$,
computing the realizations $t_1,\dots,t_m$ of the test statistic, and rejecting $H_0$ whenever
the observed $t_{\text{exp}}$ is larger than the $k$-th largest value among
$t_1,\dots,t_m$. Without loss of generality, we assume $t_1,\dots,t_m$ are
in decreasing order, so that $t_k$ is the $k$-th largest. Symmetry
considerations show that this test has significance level $\alpha=k/(m+1)$. When
$S_0$ is not available in closed form, one can replace it by a pointwise average
\begin{equation}
\label{e:average}
\bar{S}_0(r) = \frac{1}{m+1}(S_1(r)+\dots+S_m(r)+\hat{S}(r))
\end{equation}
while preserving the significance level, see \citep{BDHLMN14}.

To see why this test is called an \emph{envelope test}, let $k=10$ and
$m=199$ so that $\alpha=0.05$. We use as a signal a synthetic chirp plus white
noise as in Figure~\ref{f:testPower}, with SNR$=20$. In Figure~\ref{f:envelope}, we take
$r_{\min}=0$ and let $r_{\max}$ vary, showing for each $r_{\max}$ the corresponding $t_{k}$ as the upper limit of the green
shaded envelope. The black line shows $t_{\text{exp}}$ at each $r_{\max}$, for the
same realizations of the tested signal and the white noise spectrograms. To
interpret this plot, imagine the user had fixed $r_{\max}$ to some value, then
he would have rejected $H_0$ if and only if the corresponding intersection of
the black line with $r=r_{\max}$ was above the green area. Note that the
significance of the test in only guaranteed if $r_{\max}$ is fixed prior to
observing data or simulations. Still, Figure~\ref{f:envelope} gives a heuristic
to identify characteristic \emph{scales of interaction} after $H_0$ is rejected.
For instance, characteristic scales could be values of $r_{\max}$ where the data
curve in black leaves the green envelope\footnote{Caveats have been issued against overinterpreting these scales of
  interaction, see \citep{BDHLMN14}.}. The user can thus identify regions of the
spectrogram that possibly correspond to signal (defined as "different from white
noise"). 
To illustrate this, consider again both plots of
Figure~\ref{f:envelope}. There is a hint of an interaction --~an excess or
deficit of pairs-- between
$r_{\max}=0.5$ and $r_{\max}=1$, and this interaction cannot be explained by
noise only. Although we do not delve further here and rather focus on how the
power of the test varies with parameters, this scale can be used to
filter out the noise, in the manner of the Delaunay-based filtering of \cite{Fla15}.

\begin{figure}
\subfigure[$T_\infty$]{
\includegraphics[width=\twofig]{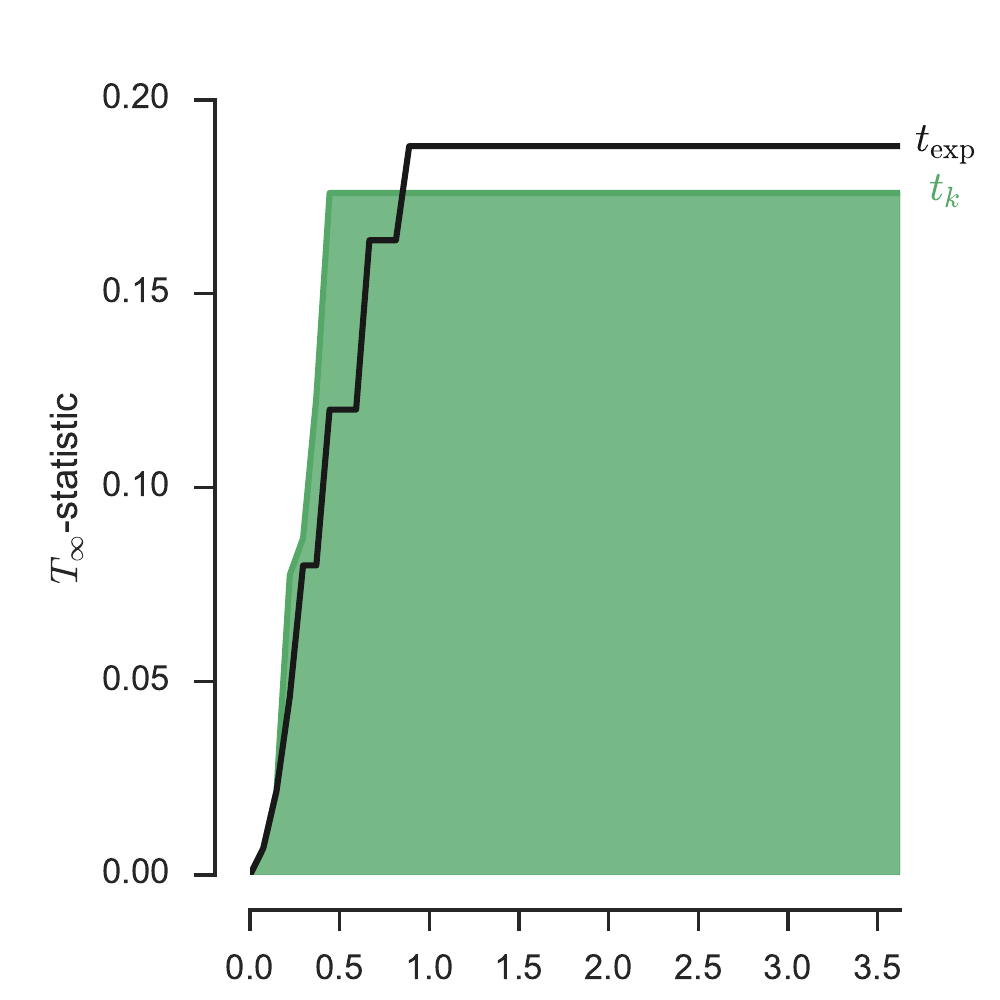}
\label{e:envelpe:tinf}
}
\subfigure[$T_2$]{
\includegraphics[width=\twofig]{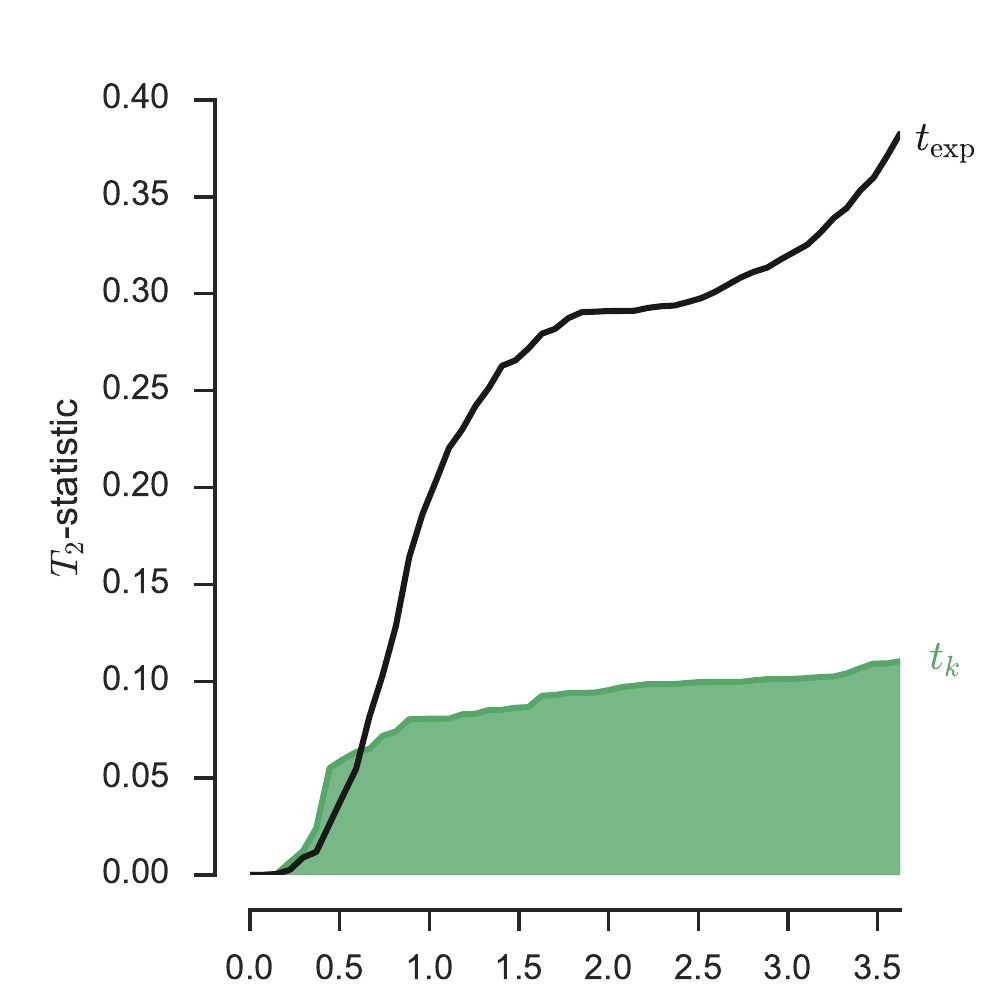}
\label{e:envelpe:t2}
}
\caption{Envelope plots for the detection test of Section~\ref{s:detection} for
  the supremum and 2-norm of the deviation of the $L$ functional statistic from
  its pointwise average \eqref{e:average}.}
\label{f:envelope}
\end{figure}

\subsubsection{Assessing the power of the test}
\label{s:beta}
The significance $\alpha$ of the test --~the probability of rejecting $H_0$
while $H_0$ is true~-- is fixed by the user as in Section~\ref{s:alpha}. It
remains to investigate the power $\beta$ of the test, that is, the probability
of rejecting $H_0$ when one should. Following Section~\ref{s:signalplusnoise},
we expect $\beta$ to increase with SNR, which should be large enough to ``push''
zeros away from the time-frequency support of the signal to be detected. We also
expect the power to be larger when the observation window is not too much larger
than the time-frequency support $A_S$ of the signal.


We back these claims by the experiment in Figure~\ref{f:testPower}, where we
assume signals take the form of linear chirps. Still taking $m=199$ and $k=10$,
so that $\alpha=0.05$, we build each of the six panels as follows: we simulate a
mock signal made using a linear chirp plus noise, with SNR indicated on the
plot, growing from left to right. We then repeat $200$ times: 1) simulate $m$
white noise spectrograms, 2) check wether $H_0$ is rejected for each value
of $r_{\max}$. We can thus estimate the probability $\beta$ of rejecting $H_0$
for various choices of $r_{\max}$ the user could have made. We plot both the
power using $S=L$ or $S=F$, choosing the $2-$norm in \eqref{e:norms} and the
empirical average \eqref{e:average}. We estimate the functional statistics using the \texttt{spatstat} R
package\footnote{Version 1.51-0, see \url{http://spatstat.org/}}. To identify
the statistical significance of our estimated powers, we plot Clopper-Pearson
confidence intervals for $5$ values of $r_{\max}$, using a Bonferroni correction for the $10$ multiple tests involved
on each plot, see e.g. \cite{Was13}. Finally, the top row of
Figure~\ref{f:testPower} corresponds to a signal support that matches the size of
the observation window, while the bottom row is half that. On each panel, an inlaid plot
depicts the spectrogram for one realization of the signal corrupted by white
noise. Spectrogram zeros are in white.

\begin{figure}
\includegraphics[width=\threefig]{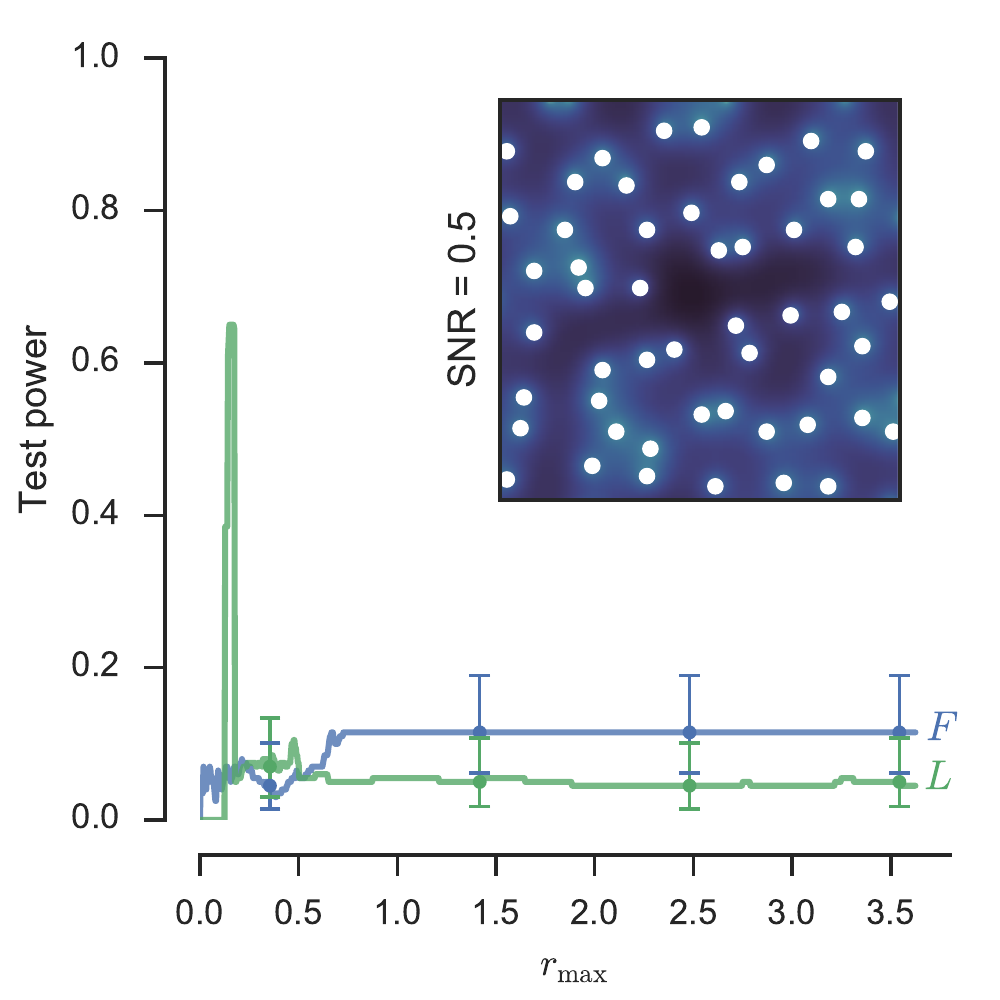}
\includegraphics[width=\threefig]{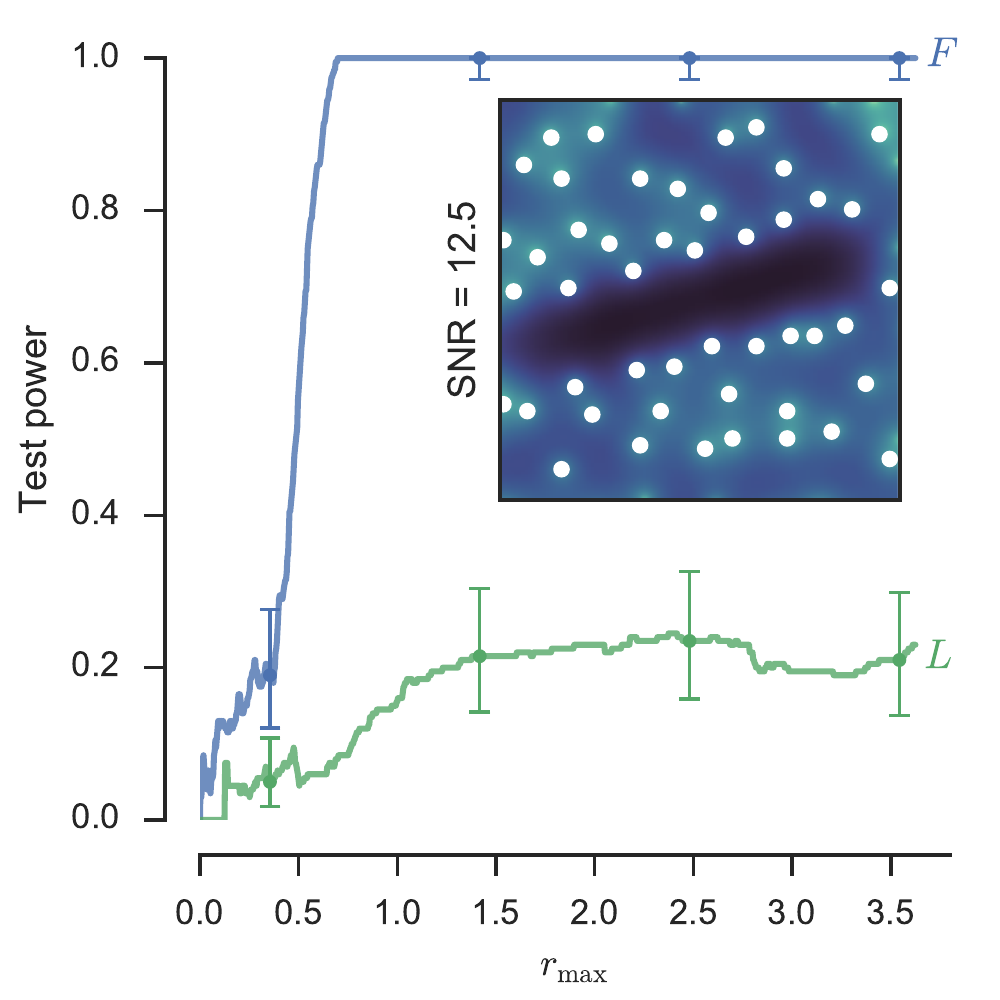}
\includegraphics[width=\threefig]{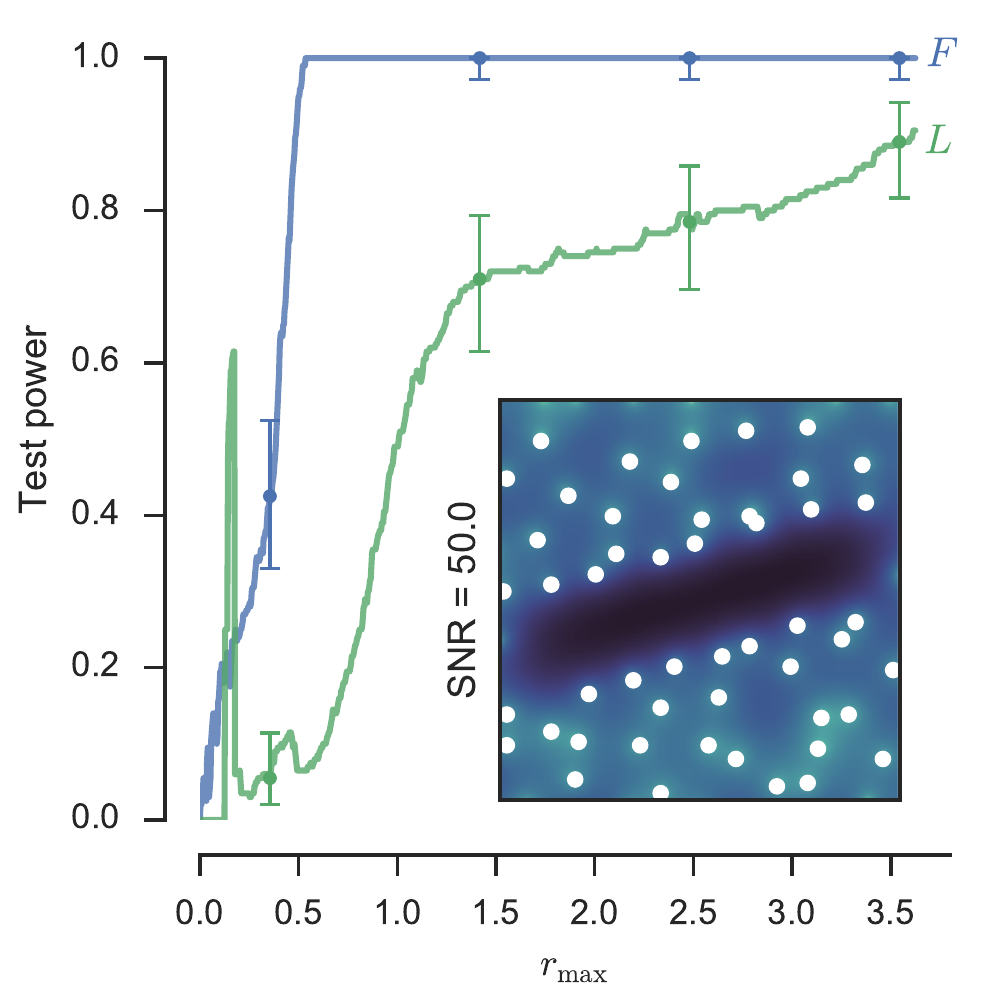}\\
\includegraphics[width=\threefig]{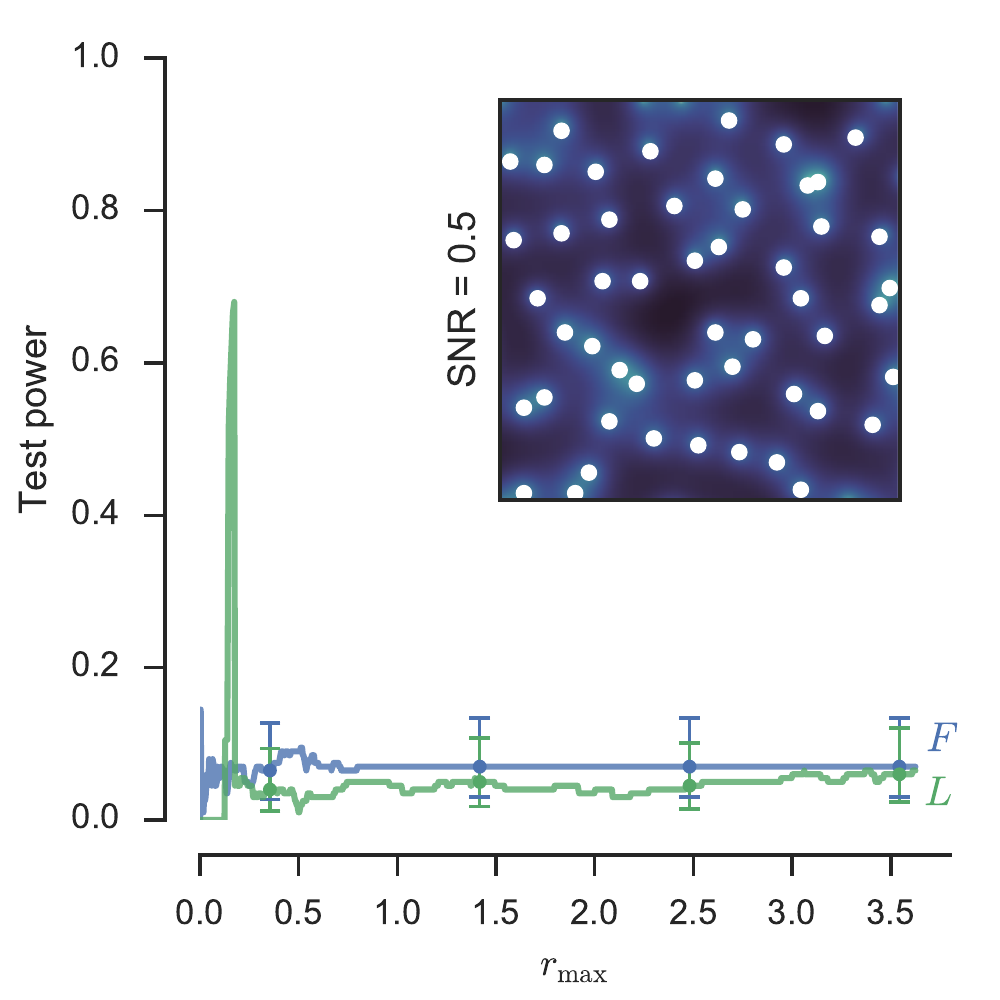}
\includegraphics[width=\threefig]{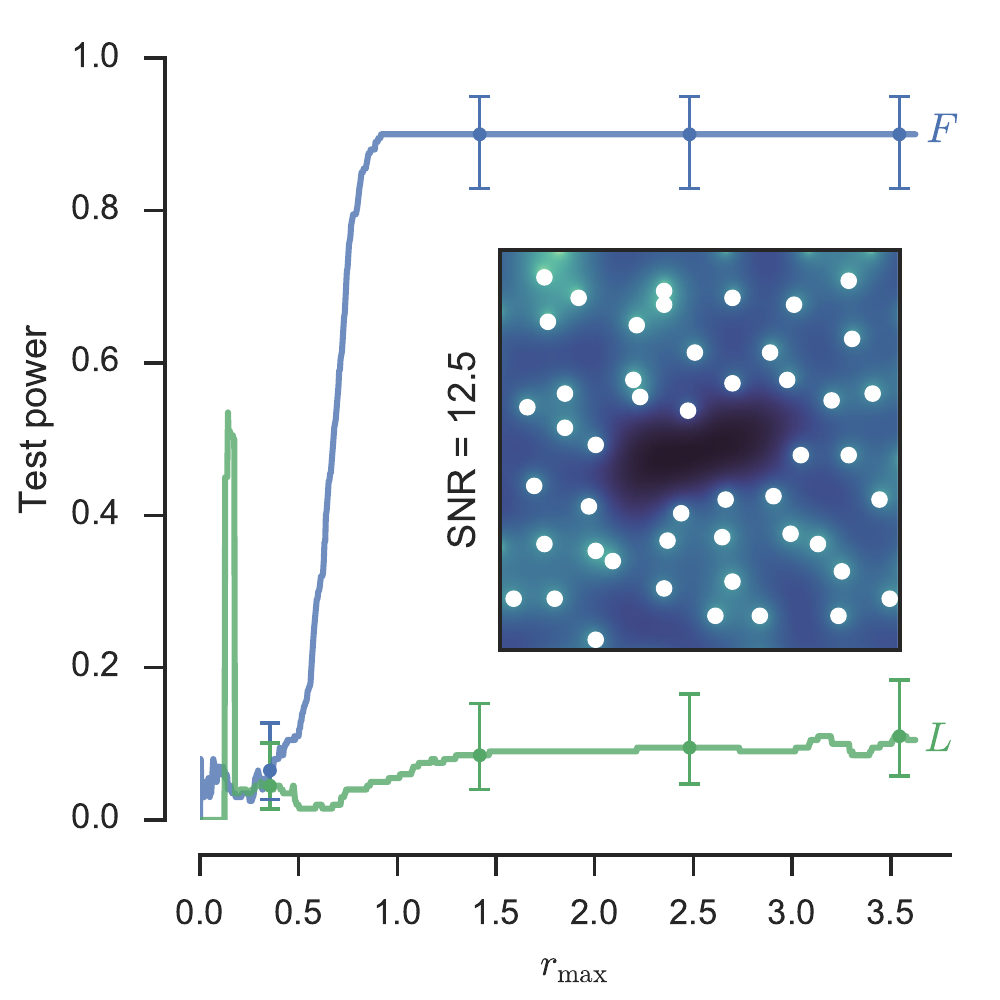}
\includegraphics[width=\threefig]{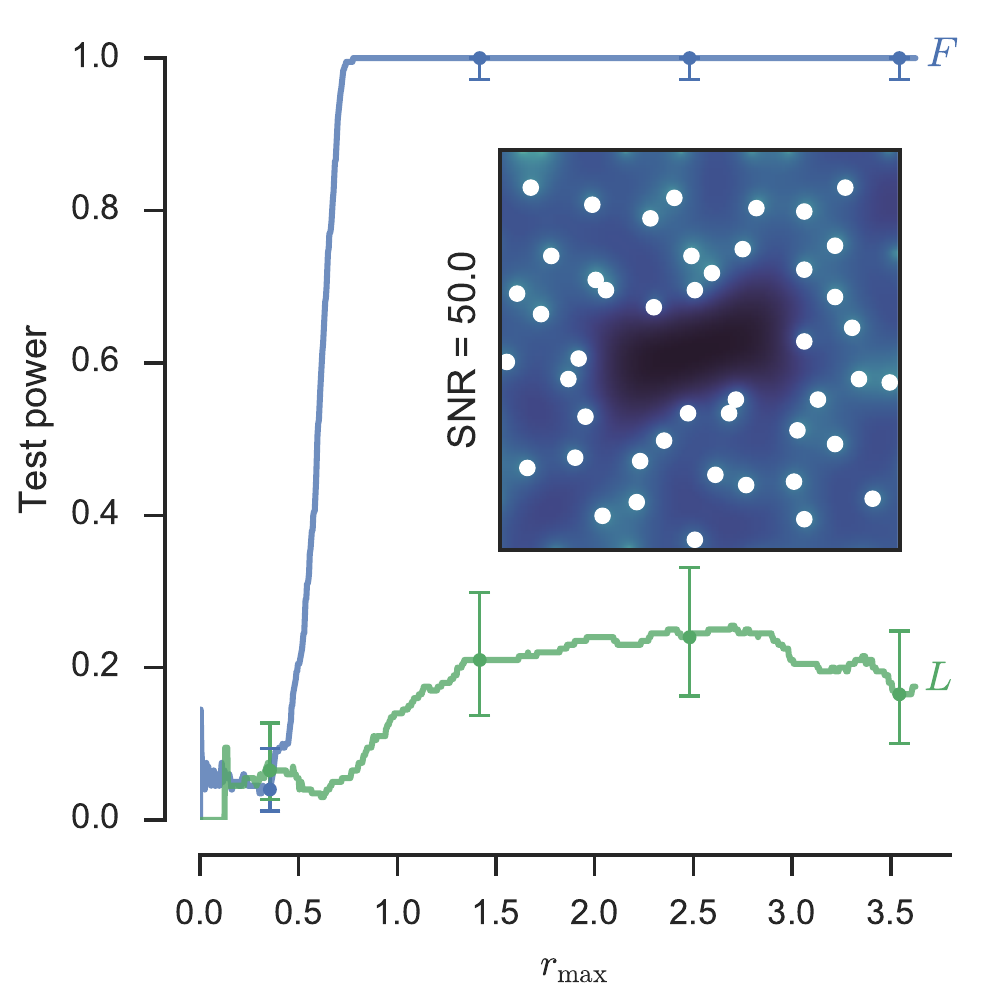}

\caption{
Assessing the power of the test on detecting a linear chirp with various SNRs
across columns. The top row corresponds to a larger support of the chirp
compared to the observation window.
\label{f:testPower}
}
\end{figure} 

The results confirm our intuitions: power increases with SNR, and decreases as
the size of the support of the signal diminishes with respect to the observation
window. In all experiments, the best power is obtained by taking $r_{\max}$ to
be as large as possible, which here means half of the observation window. This
makes sure that as many points/pairs as possible enter the estimation of the functional
statistic $S$. Concerning the choice of functional statistic, the empty space
function $F$ performs significantly better for high SNR and large enough
$r_{\max}$. The green peaks of power at low $r_{\max}$ for some combinations of
SNR and support are due to the excess of small pairwise distances introduced by
the chirp signal. The power vanishes quickly once larger pairwise distances are
considered, due to the cumulative nature of $L$. It is hard to rely on these
peaks as they do not appear systematically and would require a careful hand-tuning of
$r_{\max}$ that would likely defeat our purpose of automatizing detection. So overall, we would recommend using
$F$ and large $r_{\max}$, which appears to be a robust best choice. We also
found (not shown) first that $F$ is superior or equal to the other
functional statistics described in Section~\ref{s:spatial} for chirp detection. Second, we found that the tests using the average \eqref{e:average}
are consistently more powerful than those using the analytic form $L_0$ of
$L$. We believe this is due to the edge correction that is implicitly made in
\eqref{e:average}, while the analytic $L_0$ corresponds to an infinite
observation window. Third, we also observed the $2-$norm in \eqref{e:norms} to
be consistently more powerful than the supremum norm.

\section{Discussion}
\label{s:discussion}
We showed how to give a mathematical meaning to the zeros of the spectrogram of
white noise, and investigated their statistical distribution for real, complex,
and --~to a lesser extent~-- analytical white noise. We have related these zeros
to the zeros of Gaussian analytic functions, a topic of booming interest in
probability. More pragmatically, we investigated the computational issues raised
by implementing tests based on spectrogram zeros. 

The connection with GAFs puts signal processing algorithms based on
spectrogram zeros on firm ground, and further progress on GAFs is bound to be
fruitful for signal processing. Perhaps less obviously, we believe
signal processing tools can also bring insight into probabilistic questions on
GAFs. For starters, the Bargmann transform, spectrogram zeros and the fast Fourier transform give a novel way to approximately simulate the zeros of the planar GAF, or even the zeros of random polynomials.

As for the detection of signals using spectrogram
zeros, we have investigated the application of standard frequentist testing
tools. They showed good power for high SNR, but the performance decreases for
low SNR and small signal support compared to the observation window. 
There are various leads to improve on these two points. First, we could
transform our global test into several local tests, trying to adapt the tested
patch to the support of the signal. Second, models for signals could be fed to
Bayesian techniques, allowing to explore all signals compatible with a
given pattern of zeros.

\subsection*{Acknowledgments}
We thank Patrick Flandrin, Adrien Hardy, and Fred Lavancier for fruitful
discussions on various aspects of this paper. RB acknowledges support from ANR
\textsc{BoB} (ANR-16-CE23-0003), and all authors acknowledge support from
ANR \textsc{BNPSI} (ANR-13-BS03-0006).
\bibliography{stft,stats}
\bibliographystyle{plainnat}

\end{document}